\newtheorem{theorem}{Theorem}
\newtheorem{lemma}{Lemma}
\newtheorem{remark}{Remark}
\renewcommand{\sb}[1]{\scalebox{0.75}[1]{#1}}
\newcommand{\sbcustom}[2]{\scalebox{1}[1]{#2}}    
\DeclareRobustCommand\sfrac[1]{\@ifnextchar/{\@sfrac{#1}}%
                                            {\@sfrac{#1}/}}
\def\@sfrac#1/#2{\leavevmode\scalebox{.9}{\kern.1em\raise.5ex
         \hbox{$\m@th\mbox{\fontsize\sf@size\z@
                           \selectfont#1}$}\kern-.1em
         /\kern-.15em\lower.25ex
          \hbox{$\m@th\mbox{\fontsize\sf@size\z@
                            \selectfont#2}$}}}
\DeclareRobustCommand\numfrac[1]{\@ifnextchar/{\@numfrac{#1}}%
                                            {\@numfrac{#1}}}
\def\@numfrac#1{\leavevmode \hbox{$\m@th\mbox{\fontsize\sf@size\z@
                           \selectfont#1}$}}
\newcommand{\real}{\mathbb R}
\newcommand{\rat}{\mathbb Q}
\newcommand{\tuple}[1]{\langle #1 \rangle}
\def\abs#1{\ensuremath{\lvert #1\rvert}} 
\let\epsilon\varepsilon
\let\emptyset\varnothing
\newcommand{\Inf}{{\sf Inf}}
\newcommand{\cale}{\mathcal E}
\newcommand{\Supp}{{\sf Supp}}
\newcommand{\Last}{\mathsf{Last}}
\newcommand{\half}{$\frac{\text{1}}{\text{2}}$}
\newcommand{\DD}{\Delta}
\newcommand{\GG}{\mathcal{G}}
\newcommand{\PP}{\delta}
\newcommand{\winval}[1]{\langle \! \langle #1 \rangle\! \rangle_{\mathit{val}} }
\newcommand{\winvalf}[1]{\winval{#1}^{{FM}}}
\newcommand{\va}{\winval{\ma}}
\newcommand{\vb}{\winval{\mi}}
\newcommand{\vaf}{\winval{\ma}^{{FM}}}
\newcommand{\vbf}{\winval{\mi}^{{FM}}}
\newcommand{\straava}{\winval{\straa}}
\newcommand{\straavaf}{\winval{\straa}^{{FM}}}
\newcommand{\strabvbf}{\winval{\strab}^{{FM}}}
\newcommand{\Win}{\mathsf{Win}}
\newcommand{\mystravaf}[1]{\winval{#1}^{{FM}}}
\newcommand{\cpremi}{\mathsf{Cpre}_{\mi}}
\newcommand{\atmi}{\mathsf{Attr}_{\mi}}
\newcommand{\straa}{\sigma}
\newcommand{\Straa}{\Sigma}
\newcommand{\strab}{\pi}
\newcommand{\Strab}{\Pi}
\newcommand{\LP}{{\sf LP}}
\DeclareMathOperator{\ma}{\mathsf{Max}}
\DeclareMathOperator{\mi}{\mathsf{Min}}
\newcommand{\prob}[1]{\mathbb{P}_{#1}}
\newcommand{\rwd}{\mathsf{rwd}}
\newcommand{\MeanSup}{\operatorname{\mathsf{MeanSup}}}
\newcommand{\MeanInf}{\operatorname{\mathsf{MeanInf}}}
\newcommand{\Buchi}{\operatorname{\mathsf{B\ddot{u}chi}}}
\newcommand{\coBuchi}{\operatorname{\mathsf{coB\ddot{u}chi}}}
\begin{document}

\sloppy

\title{{\bf Perfect-Information Stochastic Games \mbox{with Generalized Mean-Payoff Objectives}}\thanks{This research was partially supported by Austrian Science Fund (FWF) 
NFN Grant No S11407-N23 (RiSE/SHiNE), ERC Start grant (279307: Graph Games), 
Vienna Science and Technology Fund (WWTF) through project ICT15-003,
and European project Cassting (FP7-601148).
}}

\author{
Krishnendu Chatterjee$^\dag$ \quad  Laurent Doyen$^{\S}$ \\ 
\normalsize
 $\strut^\dag$ IST Austria \quad $\strut^\S$ CNRS \& LSV, ENS Cachan 
}

\date{}
\maketitle

\begin{abstract}
Graph games provide the foundation for modeling and synthesizing 
reactive processes. In the synthesis of stochastic reactive processes, 
the traditional model is perfect-information stochastic games, 
where some transitions of the game graph are controlled by two adversarial 
players, and the other transitions are executed probabilistically. 
We consider such games where the objective is the conjunction of 
several quantitative objectives (specified as mean-payoff conditions),
which we refer to as generalized mean-payoff objectives.
The basic decision problem asks for the existence of a 
finite-memory strategy for a player that ensures the 
generalized mean-payoff objective be satisfied with a desired
probability against all strategies of the opponent.
A special case of the decision problem is the almost-sure
problem where the desired probability is~1. 
Previous results presented a semi-decision procedure for $\epsilon$-approximations 
of the almost-sure problem. 
In this work, we show that both the almost-sure problem as well
as the general basic decision problem are coNP-complete,
significantly improving the previous results.
Moreover, we show that in the case of 1-player stochastic
games, randomized memoryless strategies are sufficient and the problem 
can be solved in polynomial time.
In contrast, in two-player stochastic games, we show that 
even with randomized strategies exponential memory is required in general,
and present a matching exponential upper bound.
We also study the basic decision problem with infinite-memory strategies
and present computational complexity results for the problem. 
Our results are relevant in the synthesis of stochastic reactive 
systems with multiple quantitative requirements.
\end{abstract}

\section{Introduction}
\label{sec-intro}

Reactive systems are non-terminating processes that interact continually with a changing environment. 
Since such systems are non-terminating, their behavior is described by infinite sequences of events.
The classical framework to model reactive systems with controllable and uncontrollable events are games on graphs. 
In the presence of uncertainties, we have stochastic reactive systems with probability distributions over state changes. 
The performance requirement on such systems, such as power consumption or latency,
can be represented by rewards (or costs) associated to the events of the system,
and a quantitative objective that aggregates the rewards of an execution to a single value.
In several modeling domains, however, there is not a single objective 
to be optimized, but multiple, potentially dependent and conflicting goals.
For example, in the design of an embedded system, the goal may be to maximize 
average performance while minimizing average power consumption.
Similarly, in an inventory management system, the goal would be to optimize
the costs associated to maintaining each kind of product~\cite{FV97,AltmanBook}. 
Thus it is relevant to study stochastic games with multiple 
quantitative objectives.

\smallskip\noindent{\em Perfect-information stochastic games.}
A perfect-information stochastic graph game~\cite{Condon92}, also known as turn-based stochastic 
game or \emph{2\half-player graph game}, consists of a finite directed graph with three kinds of states (or vertices): 
player-$\ma$, player-$\mi$, and probabilistic states.
The game starts at an initial state, and is played as follows: 
at player-$\ma$ states, player~$\ma$ chooses a successor state;
at player-$\mi$ states, player~$\mi$ (the adversary of player~$\ma$)  does likewise;
and at probabilistic states, a successor state 
is chosen according to a fixed probability distribution.
Thus the result of playing the game forever is an infinite path through 
the graph.
If there are no probabilistic states, we refer to the game as a 
\emph{2-player graph game}; 
if there are no player-$\mi$ states, we refer to the (1\half-player) game
as a Markov decision process (MDP);
if there are no probabilistic states and no player-$\mi$ states,
then the (1-player) game is a standard graph.

The class of 2-player graph games has been used for a long time 
to synthesize non-stochastic reactive systems~\cite{BL69,PnueliRosner,RamadgeWonham}:
a reactive system and its environment represent the two players, whose states 
and transitions are specified by the vertices and edges of a game graph.
Similarly, MDPs have been used to model stochastic processes without 
adversary~\cite{FV97,Puterman}.
Consequently, 2\half-player graph games, which subsume both 2-player 
graph games and MDPs, provide the theoretical foundation
to model stochastic reactive systems~\cite{FV97,RF91}.

\smallskip\noindent{\em Mean-payoff objectives.}
One of the most classical example of quantitative objectives is the mean-payoff 
objective~\cite{FV97,Puterman,Gi57,EM79}, where a reward is associated to each
state and the payoff of a path is the long-run average of the rewards of the 
path (computed as either $\liminf$ or $\limsup$ of the averages of the finite 
prefixes to ensure the payoff value always exists).
While traditionally the verification and the synthesis problems were considered
with Boolean objectives~\cite{PnueliRosner,RamadgeWonham,KV05}, recently quantitative objectives have 
received a lot of attention~\cite{BCHJ09,CCHRS11,BBFR13}, as they specify requirements 
on resource consumption (such as for embedded systems or power-limited systems) as well 
as performance-related properties.

\smallskip\noindent{\em Various semantics for multiple quantitative objectives.}
The two classical semantics for quantitative objectives are as follows~\cite{BBCFK14}:
the first is the expectation semantics, which is a probabilistic average of 
the quantitative objective over the executions of the system;
and the second is the satisfaction semantics, which consider the probability of
the set of executions where the quantitative objective is at least a required
threshold value~$\nu$.
The expectation objective is relevant in situations where we are interested 
in the ``average'' behaviour of many instances of a given system, while 
the satisfaction objective is useful for analyzing and optimizing the desired
executions, and is more relevant for the design of critical stochastic reactive 
systems (see~\cite{BBCFK14} for a more detailed discussion).
For example, consider one mean-payoff objective that specifies the set of 
executions where the average power consumption is at most 5 units, and 
another mean-payoff objective that specifies the set of executions where 
the average latency is at most 10 units. 
A multiple objective asks to \emph{satisfy} both, i.e., their conjunction. 
We refer to such objectives (i.e., conjunction of multiple mean-payoff 
objectives) as {\em generalized mean-payoff objectives}\footnote{In the verification
literature, conjunction of reachability, B\"uchi, and parity objectives, 
are referred to as generalized reachability, generalized B\"uchi, and generalized
parity objectives, respectively, and generalized mean-payoff objectives naming is 
for consistency.}. 
The goal of player~$\ma$ is to maximize the probability of satisfaction 
of the generalized mean-payoff objective while player~$\mi$ tries to 
minimize this probability, i.e., the game is zero-sum.
Concrete applications of 
2\half-player graph games with generalized mean-payoff objectives have been considered, such as 
best-effort synthesis where the goal is to minimize the violation 
of several incompatible specifications~\cite{CGHRT12}, real-time
scheduling algorithms with requirements on the utility and energy consumption~\cite{CPKS14},
and electric power distribution in an avionics application~\cite{BKTW15}.
In particular, for the real-world avionics application in~\cite{BKTW15},
both two adversarial players, stochastic transitions, as well as multiple mean-payoff
objectives are required, i.e., the application can be modeled as 2\half-player
graph games with generalized mean-payoff objectives, but not in a strict subclass.

\smallskip\noindent{\em Computational questions.} 
In this work, we consider 2\half-player graph games with generalized 
mean-payoff objectives in the satisfaction semantics.
A strategy for a player is a recipe that given the history of interaction so far 
(i.e., the sequence of states) prescribes the next move.
The basic decision problem asks, given a 2\half-player graph 
game, a generalized mean-payoff objective, and a probability threshold 
$\alpha$, whether there exists a strategy for player~$\ma$ to ensure the objective 
be satisfied with probability at least $\alpha$ against all strategies of 
player~$\mi$. 
Since strategies in games correspond to implementations of controllers for 
reactive systems, a particularly relevant question is to ask for the existence 
of a finite-memory strategy in the basic decision problem, instead of an 
arbitrary strategy.
Moreover, an important special case of the basic decision problem is the 
almost-sure problem, where the probability threshold $\alpha$ is equal to~$1$.

\smallskip\noindent{\em Previous results.}
We summarize the main previous results for MDPs, 2-player graph games, and 
2\half-player graph games, with generalized mean-payoff objectives.

\begin{compactenum}
\item {\em MDPs.} The basic decision problem for generalized mean-payoff objectives in MDPs 
with infinite-memory strategies can be solved in polynomial time~\cite{BBCFK14}. 
The problem under finite-memory strategies has not been addressed yet.

\item {\em 2-player games.} 
The following results are known~\cite{VCDHRR15}:
the basic decision problem for generalized mean-payoff objectives in 
2-player graph games, both under finite-memory and infinite-memory 
strategies, is coNP-complete;
moreover, for infinite-memory strategies if the mean-payoff objective
is defined as the limit supremum of the averages (rather than limit infimum of 
the average), then the problem is in NP $\cap$ coNP.

\item {\em 2\half-player games.} 
The almost-sure problem for generalized mean-payoff objectives in 2\half-player 
graph games under finite-memory strategies was considered in~\cite{BKTW15}, and 
a semi-algorithm (or semi-decision procedure) was presented for approximations
of the problem.

\item {\em Memory of strategies}. Infinite-memory strategies are strictly more powerful than finite-memory 
strategies, even in 1-player graph games thus also in MDPs and 2-player graph games:
there are games where an infinite-memory strategy can ensure the objective with probability~1 
while all finite-memory strategies fail to do so\footnote{However, in some variants of the decision problem 
(such as requiring the mean-payoff value, computed as the $\liminf$ of the averages of the finite prefixes, 
be strictly greater than a threshold~$\nu$) finite-memory strategies are as powerful as
infinite-memory strategies in MDPs~\cite{CR15}. }~\cite{VCDHRR15}.
\end{compactenum}

\smallskip\noindent{\em Our contributions.}
The previous results suggest that 2\half-player graph games with generalized 
mean-payoff objectives are considerably more complicated than 2-player graph 
games as well as MDPs, as even the decidability of the almost-sure problem was 
open for 2\half-player graph games for finite-memory strategies (the previous 
result neither gives an exact algorithm, nor establishes decidability for 
approximation).
In this work we present a complete picture of decidability as well as computational 
complexity. 
Our results are as follows:
\begin{compactenum}

\item {\em MDPs.} First we study the generalized mean-payoff problem under 
finite-memory strategies in MDPs. We present a polynomial-time algorithm,
and show that with randomization, memoryless strategies (which do not depend on
histories but only on the current state) are sufficient,
i.e., for finite-memory optimal strategies no memory is required.

\item {\em 2\half-player games.} 
For 2\half-player graph games with generalized mean-payoff objectives we show 
that: 
(1)~the basic decision problem is coNP-complete under finite-memory strategies (significantly improving 
the known semi-decidability result for approximation of the almost-sure 
problem~\cite{BKTW15}), and moreover, the same complexity holds for the almost-sure problem; 
and (2)~under infinite-memory strategies, the computational complexity results 
coincide with the special case of 2-player graph games.

\item {\em Memory of strategies.} Under finite-memory strategies, 
in contrast to MDPs where we show with randomization no memory is required,
we establish an exponential lower bound (even with randomization) for memory 
required in 2\half-player graph games with generalized mean-payoff objectives.
We also present a matching upper bound showing that exponential memory
is sufficient.
\end{compactenum}

\smallskip\noindent{\em Key technical insights.}
We show that for generalized mean-payoff objectives, for the adversary, 
pure and memoryless strategies are sufficient. 
Under finite-memory strategies for player~$\ma$, this result is established
using the following ideas: 
\begin{compactitem}

\item In general for prefix-independent objectives (objectives that do not change
if finite prefixes are added or removed from a path), we show that sub-game perfect 
strategies exist, where a strategy is sub-game perfect if it is optimal after every
finite history.
Such a result is known for infinite-memory strategies using results
from martingale theory~\cite{GK14}. 
Our proof for finite-memory strategies is conceptually simpler, and 
uses combinatorial arguments and well-known discrete properties of MDPs 
(see Lemma~\ref{lem:subgame-perfect}, Section~\ref{sec:half}).

\item Then using the above result we show that for a sub-class of 
prefix-independent objectives (that subsume generalized mean-payoff objectives) 
for the adversary pure memoryless strategies suffice 
(see Theorem~\ref{theo:half-memoryless}, Section~\ref{sec:half}).
Moreover, for this class of objectives we establish determinacy when each
player is restricted to finite-memory strategies, which is of independent 
interest (see also Theorem~\ref{theo:half-memoryless}); and also show that such 
determinacy result does not hold for all prefix-independent objectives 
(see Remark~\ref{rem:nodet}).

\item For MDPs, we generalize a result of~\cite{KS88} from graphs to MDPs, 
to obtain a linear-programming solution for the generalized mean-payoff objectives 
under finite-memory strategies (see Theorem~\ref{theo:mdp}, Section~\ref{sec:finmem}). 
\end{compactitem}
Combining these results we obtain the coNP upper bound for the basic 
decision problem for 2\half-player graph games and the coNP 
lower bound follows from existing results on 2-player graph games 
(see Theorem~\ref{theo:finite-coNP-complete}, Section~\ref{sec:finmem}).

\smallskip\noindent{\em Related works.} 
We have described the most relevant related works in the paragraph {\em Previous
results.} 
We discuss other relevant related works.
Markov decision processes with multiple objectives have been studied in numerous
works, for various quantitative objectives, such as mean-payoff~\cite{Cha07,BBCFK14}, 
discounted sum~\cite{CMH06,CFW13}, total reward~\cite{FKN11} as well as qualitative 
objectives~\cite{EKVY08}, and 
their combinations~\cite{CKK15,Baier-CSL-LICS-1,Baier-CSL-LICS-2,CR15}.
The problem of 2-player graph games with multiple quantitative objectives has 
also been widely studied both for finite-memory strategies~\cite{VCDHRR15,CRR14,JLS15,BR15,Vel14}
as well as infinite-memory strategies~\cite{VCDHRR15,CV13}.
In contrast, for 2\half-player games with multiple quantitative objectives only
few results are known~\cite{BKTW15,CFKSW13}, because of the inherent difficulty to handle two-players, probabilistic
transitions, as well as multiple objectives all at the same time.
A semi-decision procedure for approximation of the almost-sure problem for 2\half-player
games with generalized mean-payoff objectives was presented in~\cite{BKTW15}, 
which we significantly improve.
The class of 2\half-player graph games with positive Boolean combinations of total-reward
objectives was considered in~\cite{CFKSW13}, and the problem was established to be
PSPACE-hard and undecidable for pure strategies.


\section{Definitions}\label{sec:def}


\noindent{\bf Probability distributions.}
For a finite set $S$, we denote by $\DD(S)$ the set of all probability 
distributions over $S$, i.e., the set of functions $p: S\to [0,1]$ 
such that $\sum_{s\in S} p(s) = 1$.
The \emph{support} of $p$ is the set $\Supp(p) = \{s \in S \mid p(s) > 0\}$.
For a set $U \subseteq S$ let $p(U) = \sum_{s \in U} p(s)$.

\smallskip\noindent{\bf Perfect-information stochastic games.}
A \emph{perfect-information stochastic game} (for brevity, stochastic games
in the sequel) 
is a tuple $\GG = \tuple{S, (S_{\ma}, S_{\mi}), A, \PP}$, 
consisting of a finite set $S = S_{\ma} \uplus S_{\mi}$ of states
partitioned into the set $S_{\ma}$ of states controlled by player~$\ma$ (depicted
as round states in figures) and the set $S_{\mi}$ of states controlled by player~$\mi$ (depicted
as square states in figures), a finite set $A$ of actions, 
and a probabilistic transition function $\PP: S \times A \to \DD(S)$.
If $\PP(s,a)(s') > 0$, we say that $s'$ is an \emph{$a$-successor} of $s$.
A transition $\PP(s,a)$ is \emph{deterministic} if $\PP(s,a)(s') = 1$ for some 
state $s'$. 
The underlying graph of $\GG$ is $(S, E)$ where $E = \{(s,s') \mid 
\PP(s,a)(s') > 0 \text{ for some } a \in A \}$.

For complexity results, we consider that the probabilities in stochastic games
are rational numbers with numerator and denominator 
encoded in binary.

\smallskip\noindent{\bf Markov decision processes and end-components.}
A \emph{Markov decision process} (MDP) is the special case of a stochastic
game where either $S_{\ma} = \emptyset$, or $S_{\mi} = \emptyset$. 
Given a state $s \in S$ and a set $U \subseteq S$,
let $A_U(s)$ be the set of all actions $a \in A$ such 
that $\Supp(\PP(s,a)) \subseteq U$. 
A \emph{closed} set in an MDP is a set $U \subseteq S$ such 
that $A_U(s) \neq \emptyset$ for all $s \in U$. 
%
A set $U \subseteq S$ is an {\em end-component}~\cite{CY95} if (i)~$U$ is closed, and 
(ii)~the graph $(U,E_U)$ is strongly connected
where $E_U=\{(s,t) \in U \times U \mid \PP(s,a)(t) > 0 \text{ for some } a \in A_U(s)\}$
denote the set of edges given the actions.
We denote by $\cale(M)$ the set of all end-components of an MDP~$M$.  

\smallskip\noindent{\bf Markov chains and recurrent sets.}
A \emph{Markov chain} is the special case of an MDP 
where the action set $A$ is a singleton. In Markov chains,
end-components are called \emph{closed recurrent sets}.


\smallskip\noindent{\bf Plays and strategies.}
A \emph{play} is an infinite sequence $s_0 s_1 \ldots \in S^\omega$ of states.
A \emph{randomized strategy} for $\ma$ is a recipe to describe what is the next action to play
after a prefix of a play ending in a state controlled by player~$\ma$;
formally, it is a function $\straa:S^*S_{\ma} \to \DD(A)$ that provides
probability distributions over the action set. 
A \emph{pure strategy} is a function $\straa:S^*S_{\ma} \to A$ that provides
a single action, which can be seen as a special case of randomized strategy
where for every play prefix $\rho \in S^*S_{\ma}$ there exists an action $a \in A$ 
such that $\straa(\rho)(a) = 1$.

We consider the following memory restrictions on strategies.
A strategy $\straa$ is \emph{memoryless} if it is independent of the past and depends only on 
the current state, that is $\straa(\rho) = \straa(\Last(\rho))$ for all play prefixes 
$\rho \in S^*S_{\ma}$, where $\Last(s_0 \dots s_k) = s_k$.
In the sequel, we call memoryless strategies the pure memoryless strategies,
and we emphasize that strategies $\straa: S_{\ma} \to \DD(A)$ are not necessarily 
pure by calling them randomized memoryless.


A strategy $\straa$ uses \emph{finite memory} if it can be described 
by a transducer $\tuple{M, m_0, \straa_u, \straa_n}$ consisting of a finite
set $M$ (the memory set), an initial memory value $m_0 \in M$, 
an update function $\straa_u: M \times S \to M$ for the memory, 
and a next-action function $\straa_n: M \to \DD(A)$; the transducer $\tuple{M, m_0, \straa_u, \straa_n}$ 
defines the strategy $\straa$ such that $\straa(\rho) = \straa_n(\hat{\straa}_u(m_0,\rho))$ 
for all play prefixes $\rho \in S^*S_{\ma}$
where $\hat{\straa}_u$ extends $\straa_u$ to sequences of states as usual (i.e.,
$\hat{\straa}_u(m, \rho \cdot s) = \straa_u(\hat{\straa}_u(m, \rho), s)$).
Given a finite-memory strategy $\straa$ for player~$\ma$, let $\GG_\straa = \tuple{S', (\emptyset, S'_{\mi}), A, \PP'}$ 
be the MDP obtained by playing $\straa$ in $\GG$, where 
$S' = S'_{\mi} = S \times M$    
and the transition function $\PP'$ is defined
for all $\tuple{s,m} \in S'$ and action $a \in A$ of player~$\mi$ as follows,
for all $s'\in S$, where $m' = \straa_u(m,s)$:
\begin{itemize}
\item if $s \in S_{\ma}$, then 
$\PP'(\tuple{s,m}, a)(\tuple{s',m'}) = \sum_{b \in A} \straa_n(m')(b) \cdot \PP(s,b)(s')$; 
\item if $s \in S_{\mi}$, then 
$\PP'(\tuple{s,m}, a)(\tuple{s',m'}) = \PP(s,a)(s')$.
\end{itemize}

Strategies $\strab$ for player~$\mi$ are defined analogously, as well as the memory restrictions.
A strategy that is not finite-memory is referred to as an infinite-memory 
strategy. We denote by $\Straa$ the set of all strategies for player~$\ma$, 
and by $\Straa^{PM}$, 
and $\Straa^{FM}$ respectively the set of all pure memoryless,
and all finite-memory strategies 
for player~$\ma$. We use analogous notation $\Strab$, $\Strab^{PM}$, 
and $\Strab^{FM}$ for player~$\mi$.

\smallskip\noindent{\bf Objectives.}
An {\em objective} is a Borel-measurable set of plays~\cite{Billingsley}.
In this work we consider conjunctions of mean-payoff objectives. 
Some of our results are related to more general classes of prefix-independent
and shuffle-closed objectives.
We define the relevant objectives below:
\smallskip
\begin{compactenum}
\item {\em Prefix-independent objectives.}
An objective $\Omega \subseteq S^\omega$ is \emph{prefix-independent} if for all plays $\rho \in S^\omega$,
and all states $s \in S$, we have $\rho \in \Omega$ if and only if $s \cdot \rho \in \Omega$,
that is the objective is independent of the finite prefixes (of arbitrary length) 
of the plays.

\item {\em Shuffle-closed objectives.} 
A \emph{shuffle} of two plays $\rho_1$, $\rho_2$ is a play $\rho = u_1 u_2 u_3 \dots$
such that $u_i \in S^*$  for all $i \geq 1$, and 
$\rho_1 = u_1 u_3 u_5 \dots$ and $\rho_2 = u_2 u_4 u_6 \dots$.
An objective $\Omega \in S^{\omega}$ is closed under shuffling, 
if all shuffles of all plays $\rho_1, \rho_2 \in \Omega$ belong to $\Omega$.

\item {\em Multi-mean-payoff objectives.}
Let $\rwd: S \to \rat^k$ be a \emph{reward function}\footnote{We use rational
rewards to be able to state complexity results. All other results in this paper
hold if the rewards are real numbers.} 
that assigns a $k$-dimensional vector of weights to each state. 
For $1 \leq j \leq k$, we denote by $\rwd_j: S \to \rat$ the projection of 
the function $\rwd$ on the $j$-th dimension.
The conjunction of \emph{mean-payoff-inf} objectives (which we refer as generalized 
mean-payoff objectives) is the set  
$$
\sbcustom{.95}{$\displaystyle
\MeanInf=
 \bigg\{
  s_0s_1\cdots \in S^\omega
  \mid 
  \bigwedge_{j=1}^{k} \liminf_{n \to \infty} \frac{1}{n} \cdot \sum_{i = 0}^{n-1}\rwd_j(s_{i}) \geq 0 
  \bigg\}
$}
$$
that contains all plays for which the long-run average of weights (computed as $\liminf$)
is non-negative\footnote{Note that it is not restrictive to define mean-payoff objectives with a threshold $0$
since we can obtain mean-payoff objectives defined as the long-run average of weights 
above any threshold $\nu$ by subtracting the constant $\nu$ to the reward function.} 
in all dimensions. 
The objectives inside the above conjunction (indexed by $j$) are called 
one-dimensional mean-payoff-inf objectives (in dimension $j$), and denoted $\MeanInf_j$.
The conjunction of \emph{mean-payoff-sup} objectives is the set $\MeanSup$ 
defined analogously, replacing $\liminf$ by $\limsup$ in the definition of $\MeanInf$.
\end{compactenum}


\begin{remark}\label{rmk:mean-payoff-inf-not-closed-under-shuffling}
It is easy to show that mean-payoff-inf objectives are closed under shuffling,
and that the conjunction of objectives that are closed under shuffling
is closed under shuffling~\cite{Kop06}. However, the conjunctions of mean-payoff-sup
objectives are in general not closed under shuffling~\cite[Example~1]{VCDHRR15}.
\end{remark}

\smallskip\noindent{\bf Probability measures.}
Given an initial state $s$, and a pair of strategies $(\straa,\strab)$ for $\ma$ and $\mi$,
a finite prefix $\rho = s_0  \cdots s_n$ of a play is \emph{compatible}
with $\straa$ and $\strab$ if $s_0 = s$ and for all $0 \leq i \leq n-1$, there exists an action
$a_i \in A$ such that $\PP(s_{i}, a_i)(s_{i+1}) > 0$, 
and either $s_i \in S_{\ma}$ and $\straa(s_0 \cdots s_{i})(a_i) > 0$,
or $s_i \in S_{\mi}$ and $\strab(s_0 \cdots s_{i})(a_i) > 0$.
A probability can be assigned in a standard way to every finite play prefix $\rho$, 
and by Caratheodary's extension theorem a probability measure $\prob{s}^{\straa,\strab}(\cdot)$ 
of objectives can be uniquely defined.  
For MDPs, we omit the strategy of the player with empty set of states, and 
for instance if $S_{\mi} = \emptyset$ we denote by $\prob{s}^{\straa}(\cdot)$ 
the probability measure under strategy $\straa$ of player~$\ma$.

\smallskip\noindent{\bf Value and almost-sure winning.}
The optimal \emph{value} from an initial state $s$ of a game with objective $\Omega$ is defined by
$$\va(\Omega,s) = \sup_{\straa \in \Straa} \inf_{\strab \in \Strab} \prob{s}^{\straa,\strab}(\Omega).$$
By Martin's determinacy result~\cite{Mar98}, the optimal value is also 
$\vb(\Omega,s) = \inf_{\strab \in \Strab} \sup_{\straa \in \Straa} \prob{s}^{\straa,\strab}(\Omega)$,
the infimum probability of satisfying $\Omega$ that player~$\mi$ can ensure
against all strategies of player~$\ma$.
In other words the determinacy shows that $\va(\Omega,s) = \vb(\Omega,s)$, and the order
of sup and inf in the quantification of the strategies can be exchanged.

A strategy $\straa$ for player~$\ma$ is \emph{optimal} from a state $s$
if for all strategies $\strab$ for player~$\mi$ it ensures that 
$\prob{s}^{\straa,\strab}(\Omega) \geq \va(\Omega,s)$. The value (or winning probability) of
a strategy $\straa$ in state $s$ is $\straava(\Omega,s) = \inf_{\strab \in \Strab} \prob{s}^{\straa,\strab}(\Omega)$. 
We omit analogous definitions for player~$\mi$.

We say that player~$\ma$ wins almost-surely 
from an initial state~$s$ 
if there exists a strategy $\straa$ for $\ma$ such that for every strategy
$\strab$ of player~$\mi$ we have $\mathbb{P}^{\straa,\strab}_{s}(\Omega)=1$.
The state $s$ and the strategy $\straa$ are called \emph{almost-sure} 
winning for player~$\ma$.

\smallskip\noindent{\bf Finite-memory values and almost-sure winning.}
The optimal \emph{finite-memory value} (for player~$\ma$) is defined analogously, 
when the players are restricted to finite-memory strategies:
$$\vaf(\Omega,s) = \sup_{\straa \in \Straa^{FM}} \inf_{\strab \in \Strab^{FM}} \prob{s}^{\straa,\strab}(\Omega).$$


A strategy $\straa$ is \emph{optimal for finite memory} from a state $s$
if it uses finite memory and for all finite-memory strategies $\strab$ for player~$\mi$ it ensures that 
$\prob{s}^{\straa,\strab}(\Omega) \geq \vaf(\Omega,s)$.
We define analogously almost-sure winning with finite-memory strategies,
and the finite-memory value $\straavaf(\Omega,s)$ of $\straa$ in state~$s$
(against finite-memory strategies of player~$\mi$). 
We define the finite-memory value for player~$\mi$ by 
$\vbf(\Omega,s) = \inf_{\strab \in \Strab^{FM}} \sup_{\straa \in \Straa^{FM}} \prob{s}^{\straa,\strab}(\Omega)$
and the finite-memory value of strategy $\strab$ for player~$\mi$ by 
$\strabvbf(\Omega,s) = \sup_{\straa \in \Straa^{FM}} \prob{s}^{\straa,\strab}(\Omega)$.
We show in Theorem~\ref{theo:half-memoryless} for a large class of objectives 
(namely, prefix-independent shuffle-closed objectives) that 
the finite-memory value for player~$\ma$ and for player~$\mi$ coincide,
and allowing arbitrary strategies for player~$\mi$ (against finite-memory 
strategies for player~$\ma$) does not change the finite-memory value.

\smallskip\noindent{\bf Subgame-perfect strategies.}
Given a strategy $\straa$ for $\ma$, and a finite prefix $\rho = s_0 \cdots s_k$ 
of a play, we denote by $\straa_{\rho}$ the strategy that plays from the initial 
state $s_k$ what $\straa$ would play after the prefix $\rho$, i.e. such that
$\straa_{\rho}(s_k \cdot \rho') = \straa(\rho \cdot \rho')$ for all play prefixes $\rho'$, 
and $\straa_{\rho}(s \cdot \rho')$ is arbitrarily defined for all $s \neq s_k$.

A strategy $\straa$ for $\ma$ is \emph{subgame-perfect} if for all nonempty play 
prefixes $\rho \in S^+$, the strategy $\straa_{\rho}$ is optimal from the initial 
state $\Last(\rho)$.
Analogously, the strategy $\straa$ is \emph{subgame-perfect-for-finite-memory}
if all strategies $\straa_{\rho}$ are optimal-for-finite-memory strategies 
from $\Last(\rho)$.


\smallskip\noindent{\bf Value problems.}
Given an objective $\Omega$, a threshold $\lambda \in \rat$, and an initial state $s$, 
the \emph{value-strategy problem} 
asks whether there exists a strategy $\straa$ for player~$\ma$
such that $\straava(\Omega,s) \geq \lambda$ (or whether there exists a finite-memory 
strategy $\straa$ for player~$\ma$ such that $\straavaf(\Omega,s) \geq \lambda$).
The \emph{value problem} asks whether $\va(\Omega,s) \geq \lambda$ (resp., whether
$\vaf(\Omega,s) \geq \lambda$).

\smallskip\noindent{\bf End-component lemma.}
An important property of the end-components in MDPs is that for all 
strategies (with finite memory or not) with probability~1 the set
of states that are visited infinitely often along a play is
an end-component~\cite{CY95,deAlfaro97}. 
Given a play $\rho \in S^\omega$, let $\Inf(\rho)$ be the set of states 
that occur infinitely often in $\rho$.

\begin{lemma}\label{lem:end-component}
  \cite{CY95,deAlfaro97} Given an MDP $M$, for all states $s \in S$
  and all strategies $\straa \in \Straa$, we have
  $\prob{s}^{\straa}(\{ \rho \mid \Inf(\rho) \in \cale(M) \}) = 1$.
\end{lemma}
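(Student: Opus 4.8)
The plan is to prove Lemma~\ref{lem:end-component}: for any MDP $M$, state $s$, and strategy $\straa$, we have $\prob{s}^{\straa}(\{\rho \mid \Inf(\rho) \in \cale(M)\}) = 1$. First I would observe that it suffices to show that almost every play $\rho$ satisfies two properties: (i) $\Inf(\rho)$ is closed, i.e.\ from every $t \in \Inf(\rho)$ there is an action whose support lies inside $\Inf(\rho)$ that is actually used infinitely often at $t$; and (ii) the graph on $\Inf(\rho)$ induced by such actions is strongly connected. Property~(i) is essentially automatic: if infinitely often at $t$ the strategy picked actions leaving $\Inf(\rho)$, then since $S$ is finite, some state outside $\Inf(\rho)$ would receive positive probability infinitely often, and by a Borel--Cantelli / L\'evy zero--one type argument it would be visited infinitely often almost surely, contradicting its exclusion from $\Inf(\rho)$.

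The core of the argument is property~(ii), the strong connectivity. Here I would fix a pair of states $t, t'$ both in $\Inf(\rho)$ and argue that, almost surely, if $t$ occurs infinitely often then so does every state reachable from $t$ within the sub-MDP of relevant actions; applying this with $t$ and $t'$ interchanged yields mutual reachability inside $\Inf(\rho)$, hence strong connectivity. The key probabilistic tool is the following: for each state $t$ and each state $u$ that is reachable from $t$ via a path of edges, there is a uniform constant $p_{t,u} > 0$ (depending only on the finite graph and the minimum positive transition probability, not on the strategy) such that from any history ending in $t$, the probability of reaching $u$ within the next $|S|$ steps is at least $p_{t,u}$. Then, conditioned on $t$ being visited infinitely often, by a second Borel--Cantelli argument (the conditional version, e.g.\ L\'evy's extension of Borel--Cantelli for adapted sequences of events with conditional probabilities bounded below), $u$ is visited infinitely often almost surely. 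The subtlety is choosing the ``relevant'' edges correctly: $u$ should range over states reachable from $t$ using only actions $a \in A_{\Inf(\rho)}(t)$, so that the induced edge set $E_{\Inf(\rho)}$ is what certifies the end-component; but one must be careful that on the event $\{\Inf(\rho) = X\}$ for a fixed set $X$, almost surely only actions in $A_X$ are used cofinitely often at states of $X$ (which follows from property~(i)).

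Concretely, I would enumerate the finitely many candidate subsets $X \subseteq S$ and show $\prob{s}^{\straa}(\{\Inf(\rho) = X\} \cap \{X \notin \cale(M)\}) = 0$ for each. On the event $\Inf(\rho) = X$: after some finite time every visited state is in $X$ and, by property~(i), every action used is in $A_X$ at its state; if $X$ were not an end-component then either $X$ is not closed (excluded) or $(X, E_X)$ is not strongly connected, so there exist $t, t' \in X$ with $t'$ not reachable from $t$ in $(X, E_X)$. But $t$ occurs infinitely often, and whenever we are at $t$ after the stabilization time, the next step uses an action in $A_X(t)$, so the play stays in the strongly connected component structure of $(X,E_X)$; the uniform-lower-bound reachability argument above forces every $E_X$-successor-closure of $t$ to be hit infinitely often, and in particular $X = \Inf(\rho)$ must equal the bottom SCC reachable from $t$, contradicting $t'$ being unreachable unless $(X,E_X)$ is already strongly connected. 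Summing the zero-probability events over the finitely many $X$ gives the claim.

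The main obstacle I anticipate is making the conditional Borel--Cantelli step fully rigorous: one is conditioning on the tail event $\{t \in \Inf(\rho)\}$ (or $\{\Inf(\rho) = X\}$), which is not known at any finite stage, so the clean statement is better phrased via the adapted/conditional second Borel--Cantelli lemma: if $(A_n)$ are events with $A_n$ measurable with respect to the history up to time $n + |S|$, and $\prob{}{}(A_n \mid \mathcal F_n) \geq c > 0$ infinitely often on an event $B$, then $B \subseteq \{A_n \text{ i.o.}\}$ up to a null set. Packaging the ``reach $u$ from $t$ within $|S|$ steps with probability $\geq p_{t,u}$'' events as such an adapted sequence, indexed by successive return times to $t$, and invoking this lemma is the delicate part; everything else is finite-combinatorial bookkeeping over the candidate sets $X$. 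Since this is a known result, I would cite \cite{CY95,deAlfaro97} for the full martingale-style details and present the above as the structure of the argument.
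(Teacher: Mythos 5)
The paper does not actually prove this lemma; it is imported verbatim from \cite{CY95,deAlfaro97}, so there is no in-paper argument to compare against. Your overall architecture is the standard and correct one: enumerate the finitely many candidate sets $X$, show $\prob{s}^{\straa}(\{\Inf(\rho)=X\})=0$ whenever $X\notin\cale(M)$, split the end-component property into closedness and strong connectivity, and use the conditional (L\'evy) form of the second Borel--Cantelli lemma for the closedness part. That part of your sketch is sound: if at infinitely many visits to $t$ the strategy puts non-vanishing probability on actions whose support leaves $X$, then the conditional probability of stepping outside $X$ is bounded below by $p_{\min}$ times that mass infinitely often, so some state outside $X$ recurs almost surely, contradicting $\Inf(\rho)=X$; hence almost surely on $\{\Inf(\rho)=X\}$ only actions of $A_X$ are used cofinitely often, and in particular $A_X(t)\neq\emptyset$.

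The genuine gap is your ``key probabilistic tool'' for strong connectivity. The claim that for every $u$ reachable from $t$ in the underlying graph there is a \emph{strategy-independent} constant $p_{t,u}>0$ bounding below the probability of reaching $u$ from $t$ within $|S|$ steps is false: graph reachability only says that \emph{some} sequence of actions leads from $t$ to $u$, but the strategy chooses the actions and may never play them (a strategy that always plays a self-looping action at $t$ reaches nothing else, whatever the graph looks like). Uniform lower bounds of this kind are valid only one step at a time, for the successors of the action actually sampled --- which is exactly the form you correctly use for closedness. Fortunately, the strong-connectivity step needs no probabilistic input at all: on $\{\Inf(\rho)=X\}$, once you know (from the closedness argument) that almost surely after some finite time every visited state lies in $X$ and every transition taken uses an action of $A_X$ --- hence is an edge of $E_X$ --- strong connectivity of $(X,E_X)$ is witnessed by the play itself. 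For any $t,t'\in X$ both states recur, so some occurrence of $t$ after the stabilization time is followed by a later occurrence of $t'$, and the intervening segment of the play is a path from $t$ to $t'$ consisting entirely of $E_X$-edges; exchanging $t$ and $t'$ gives mutual reachability. Replacing your uniform-reachability lemma by this deterministic observation repairs the proof; the rest of your bookkeeping over the candidate sets $X$ is fine.
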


\begin{remark}[Key properties for MDPs]\label{rem:key}
The end-component lemma is useful in the analysis of MDPs with 
prefix-independent objectives, which can be decomposed into the analysis of the 
end-components (which have useful connectedness properties), 
and a reachability analysis to the end-components.
Moreover, suppose we consider prefix-independent objectives, and the 
MDP restricted to an end-component $U$. 
Then it follows from the results of~\cite{Cha07b} that either all states of $U$ 
have value~1 or all states of $U$ have value~0. 
Hence for prefix-independent objectives in MDPs, the optimal value is the optimal 
reachability probability to the {\em winning} end-components, where a
winning end-component is an end-component with value~1.
\end{remark}

\section{Half-Memoryless Result under Finite-Memory Strategies}\label{sec:half}
We show a general result that gives a sufficient condition for existence of memoryless
strategies (for one of the players) in games played with finite-memory strategies.

\smallskip\noindent{\em Comment on finite- vs. infinite-memory proof.}
The statement and proof structure of the result are similar to~\cite[Theorem~5.2]{GK14}
that established a sufficient condition for existence of memoryless optimal strategies 
in games played with arbitrary (infinite-memory) strategies. 
However, the proof uses different techniques.  
The key to establish the existence of memoryless strategies for one of the 
players is to first establish the existence of subgame-perfect strategies
for the other player.
We establish such a result in Lemma~\ref{lem:subgame-perfect} for finite-memory
strategies. 
Without the restriction of finite memory, only the existence of $\epsilon$-subgame-perfect 
strategies is known, and the proof requires intricate arguments and involved mathematical
machinery such as Doob's convergence theorem for martingales~\cite[Theorem~4.1]{GK14}.
Our proof is combinatorial and uses basic results on MDPs (e.g., discrete properties of 
end-components).

\smallskip\noindent{\em Key ideas of the proof.}
The proof of Lemma~\ref{lem:subgame-perfect} consists in constructing from a finite-memory
strategy~$\straa$ a strategy that is subgame-perfect-for-finite-memory by successively ``improving'' 
the value of the strategy $\straa_\rho$ for each finite prefix $\rho$. 
Improvements are obtained by modifying some transitions in the transducer defining $\straa$,
from the state reached after following the finite prefix $\rho$. The modification
of transitions does not change the memory space of the strategy, and since we consider 
finite-memory strategies, although there may be infinitely many finite prefixes $\rho$
where the strategy needs to be ``improved'', there is only a finite number of memory states
to consider for improvement, which guarantees the improvement process to terminate
and yields a subgame-perfect-for-finite-memory strategy.

\begin{figure}[!tb]
  \begin{center}
    \hrule  height .33pt

\begin{picture}(80,43)(0,2)

\gasset{Nw=6,Nh=6,Nmr=3,rdist=1, loopdiam=5}

\drawline[AHnb=0,arcradius=0, linegray=0](20,40)(5,10)
\drawline[AHnb=0,arcradius=0, linegray=0](20,40)(35,10)

\node[Nframe=n](n1)(13,36){$\GG_\straa$}
\node[Nframe=n](n1)(20,42){$\tuple{s_0,m_0}$}
\drawline[AHnb=0,arcradius=0, linegray=0.5](20,40)(19,35)(21,32)

\drawline[AHnb=0,arcradius=0, linegray=0.5](21,32)(18,28)(20,24)(18,20)
\node[Nframe=n](n1)(18,17){$\tuple{s,m_s}$}
\drawline[AHnb=0,arcradius=0, linegray=0.5](21,32)(23,28)(24,24)(27,20)(27,16)
\node[Nframe=n](n1)(27,13){$\tuple{s,m'}$}

\node[Nframe=n](n1)(40,5){$\winvalf{\mi_{\GG_{\straa}}}(\Omega, \tuple{s,m_s}) > \winvalf{\mi_{\GG_{\straa}}}(\Omega, \tuple{s,m'})$}

\node[Nframe=n](n1)(53,36){$\GG_{\straa'}$}
\node[Nframe=n](n1)(60,42){$\tuple{s_0,m_0}$}
\drawline[AHnb=0,arcradius=0, linegray=0](60,40)(45,10)
\drawline[AHnb=0,arcradius=0, linegray=0](60,40)(75,10)

\drawline[AHnb=0,arcradius=0, linegray=0.5](60,40)(59,35)(61,32)

\drawline[AHnb=0,arcradius=0, linegray=0.5](61,32)(58,28)(60,24)(58,20)
\node[Nframe=n](n1)(57,16){$\tuple{s,m_s}$}
\drawline[AHnb=0,arcradius=0, linegray=0.5](61,32)(63,28)(64,24)(67,20)(67,16)
\node[Nframe=n](n1)(67,13){$\tuple{s,m'}$}

\drawline[AHnb=1,arcradius=0, linegray=0.5, dash={1}0](67,16)(58,20)






\end{picture}
    \hrule  height .33pt
    \caption{Lemma~\ref{lem:subgame-perfect}: construction of a strategy $\straa'$ 
with higher value in subgames than the optimal-for-finite-memory strategy $\straa$. \label{fig:proof1}}
  \end{center}
\end{figure}
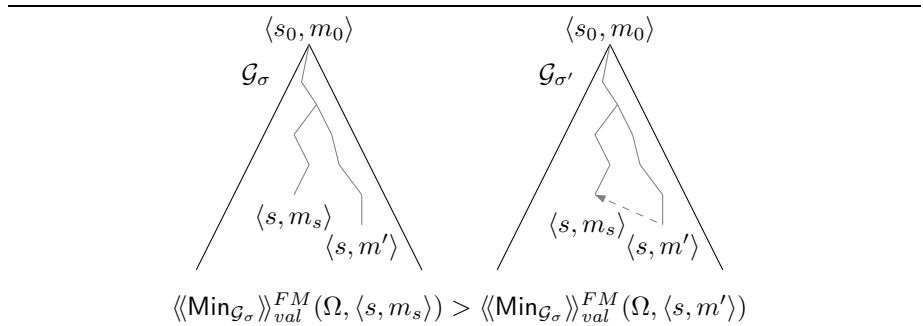

\begin{lemma} \label{lem:subgame-perfect}
In every stochastic game with a prefix-independent objective, there exists a 
subgame-perfect-for-finite-memory strategy for player~$\ma$.
\end{lemma}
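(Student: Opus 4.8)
The plan is to exhibit a finite-memory strategy $\straa$ of player~$\ma$ that is optimal-for-finite-memory from every state, and moreover such that no proper residual of it is worse --- that is, subgame-perfect-for-finite-memory --- building it by repeatedly ``improving'' the residuals of a candidate strategy. The starting point is that once $\straa = \tuple{M,m_0,\straa_u,\straa_n}$ is fixed, $\GG_{\straa}$ is a \emph{finite} MDP on $S\times M$ controlled by player~$\mi$, so by the end-component lemma (Lemma~\ref{lem:end-component}) and Remark~\ref{rem:key}, for a prefix-independent $\Omega$ the end-components of $\GG_{\straa}$ split into \emph{winning} and \emph{losing} ones, $\winvalf{\mi_{\GG_{\straa}}}(\Omega,\tuple{s,m})$ is determined by a reachability computation for player~$\mi$ towards the losing end-components and is attained by a pure memoryless strategy of player~$\mi$, and $\straavaf(\Omega,s)=\winvalf{\mi_{\GG_{\straa}}}(\Omega,\tuple{s,m_0})$. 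Using this reduction to a finite object I would first argue that optimal-for-finite-memory strategies exist from each state (e.g.\ by upper semicontinuity of the value over the compact set of bounded-memory strategies), and then fix a finite-memory $\straa$ whose memory $M$ contains a faithful disjoint copy $M^{(s)}$ of an optimal-for-finite-memory strategy from each $s\in S$, with $\straa$ moving at the very first step from $m_0$ into $M^{(s)}$ upon reading the initial state $s$. Since $\Omega$ is prefix-independent, such a $\straa$ is subgame-perfect-for-finite-memory exactly when $\winvalf{\mi_{\GG_{\straa}}}(\Omega,\tuple{s,m})=\vaf(\Omega,s)$ for every configuration $\tuple{s,m}$ reachable in $\GG_{\straa}$ from some $\tuple{t,m_0}$.

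The improvement step is depicted in Figure~\ref{fig:proof1}. As long as some reachable configuration $\tuple{s,m'}$ satisfies $\winvalf{\mi_{\GG_{\straa}}}(\Omega,\tuple{s,m'}) < \vaf(\Omega,s)$, while the ``good'' configuration $\tuple{s,m_s}$ --- namely the entry point of the folded-in optimal copy $M^{(s)}$ --- always has value exactly $\vaf(\Omega,s)$ (an invariant: it holds initially, and by the monotonicity below it never decreases and is bounded above by $\vaf(\Omega,s)$), one gets the strict inequality $\winvalf{\mi_{\GG_{\straa}}}(\Omega,\tuple{s,m_s}) > \winvalf{\mi_{\GG_{\straa}}}(\Omega,\tuple{s,m'})$ of the figure, and one would define $\straa'$ on the \emph{same} memory $M$ by rerouting every transition of the transducer that enters $\tuple{s,m'}$ so that it enters $\tuple{s,m_s}$ instead (concretely, redefining $\straa_u(\cdot,s)$ to output $m_s$ wherever it output $m'$). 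By prefix-independence, such a rerouting towards a configuration of no smaller value never decreases $\winvalf{\mi_{\GG_{\straa'}}}(\Omega,\cdot)$ at any configuration, raises it strictly at $s$ along the rerouted transitions, and keeps $\straa'$ optimal from every state.

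The desired strategy is the limit of this process, and termination is where the finiteness of memory enters: a finite prefix $\rho$ influences the residual $\straa_{\rho}$ only through the (finitely many) memory state it reaches, so by organizing the improvements suitably --- for instance, processing configurations in decreasing order of $\vaf(\Omega,s)$, or charging each modification to a memory state --- each memory state is ``improved'' only boundedly often, the process stops, and the final strategy is subgame-perfect-for-finite-memory. The step I expect to be the main obstacle is exactly this monotonicity-and-termination bookkeeping for the rerouting: one must make ``redirect the bad branch $\tuple{s,m'}$ to imitate the good branch $\tuple{s,m_s}$'' a well-defined operation on a fixed memory set and prove rigorously that no configuration's value ever drops, the delicate point being that the good branch may itself be rerouted later, so a play of $\GG_{\straa'}$ corresponds to a play of $\GG_{\straa}$ in which visits to $\tuple{s,m'}$ are ``teleported'' rather than merely prefixed --- precisely the subtlety that for infinite-memory strategies forces the use of Doob's martingale convergence theorem in~\cite{GK14}, and which here must be handled by a combinatorial argument on the finite set $M$ and the discrete structure of end-components.
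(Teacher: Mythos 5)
Your overall architecture matches the paper's: take an optimal-for-finite-memory strategy, view $\GG_\straa$ as a finite MDP for player~$\mi$, repeatedly reroute a ``bad'' configuration $\tuple{s,m'}$ to the ``good'' one $\tuple{s,m_s}$ on the same memory set, and terminate because the memory set is finite. But the proposal has a genuine gap exactly where you flag ``the main obstacle'': the claim that the rerouting never decreases the value of any configuration is asserted (``by prefix-independence\dots'') and then deferred, not proved. Prefix-independence alone does not give this. The danger is concrete: the rerouted edge changes the end-component structure of the induced MDP, and could a priori create a new end-component of value~$0$ (a new losing trap for player~$\ma$) containing configurations that previously had positive value, or increase the probability with which player~$\mi$ can reach the losing region from other configurations. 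Since by Remark~\ref{rem:key} the value of every configuration is exactly the optimal reachability probability (for player~$\mi$) to the union of value-$0$ end-components, any such change would break monotonicity, and with it both your invariant that $\tuple{s,m_s}$ keeps value $\vaf(\Omega,s)$ and your termination argument.

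The paper closes this gap with a two-step combinatorial argument that you would need to supply. First, a value-class analysis: in $\GG_\straa$, whenever a configuration has an $a$-successor in a strictly lower value class it also has an $a$-successor in a strictly higher value class; this property is preserved by the rerouting (the new edge goes to a higher class), and it forces every end-component of $\GG_{\straa'}$ to lie entirely inside a single value class of $\GG_\straa$ (otherwise the highest-valued configuration of the end-component from which one can step down would also have to be able to step up, leaving the end-component and contradicting closedness). Hence the value-$0$ end-components of $\GG_{\straa'}$ are contained in $S_{losing}$ of $\GG_\straa$. Second, a linear-programming argument: the optimal reachability probabilities to $S_{losing}$ in $\GG_\straa$ (the minimizing LP solution $x^*$), patched by giving $\tuple{s,m'}$ the value $x^*_{\tuple{s,m_s}}$, form a feasible solution of the same LP for $\GG_{\straa'}$, so the new values for player~$\mi$ can only decrease, i.e.\ the values for player~$\ma$ can only increase. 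A secondary issue: your existence argument for optimal-for-finite-memory strategies via ``compactness of bounded-memory strategies'' does not work as stated, because finite-memory strategies range over unboundedly many memory sizes and the supremum need not be attained on any compact slice; the paper instead reduces to almost-sure winning via~\cite[Theorem~4.3]{GH10} and the zero-one law of~\cite{Cha07b}.
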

\begin{proof}
Our proof is established using the following key steps:
\begin{compactenum}
\item Existence of an optimal-for-finite-memory strategy for player $\ma$.

\item Modification of the strategy for improvement of values after
finite prefixes.

\item The proof that the modification provides an 
improvement in two parts: 
once the strategy for player~$\ma$ is fixed, we have an MDP. 
In the MDP, we first show properties of the end-components, and 
second we provide bounds on the optimal reachability probability to 
the end-components to establish the improvement.

\end{compactenum}

\smallskip\noindent{\em Optimal-for-finite-memory strategy.}
We show the existence of a finite-memory strategy $\straa$ for player~$\ma$ 
in the game $\GG$ such that $\straa$ is optimal-for-finite-memory from every 
state for the prefix-independent objective~$\Omega$.
The fact that such a strategy always exists is as follows: 
it follows from~\cite[Theorem~4.3]{GH10} that it suffices to prove the result 
for almost-sure winning strategies.
Consider the set $Z$ of states with value~1 for finite-memory strategies. 
We need to show that there exists a finite-memory almost-sure winning strategy
in $Z$.
Let $0 < \epsilon < 1$, and consider a finite-memory strategy that ensures 
value at least $1-\epsilon$ from all states in $Z$. 
If a strategy can ensure positive winning from every state of a game, 
then it is almost-sure winning by the result of~\cite{Cha07b}.
The existence of an optimal-for-finite-memory strategy follows.

\smallskip\noindent{\em Notation.}
Consider an optimal-for-finite-memory strategy $\straa$.
Thus for all states $s$ of the game $\GG$ there
exists a memory value $m_s$ in the transducer of $\straa$ such that the value
of the objective~$\Omega$ in the MDP $\GG_{\straa}$ is the optimal finite-memory value, 
that is $\winvalf{\mi_{\GG_{\straa}}}(\Omega, \tuple{s,m_s}) = \winvalf{\ma_{\GG}}(\Omega, s)$
where the subscript in $\mi_{\GG_{\straa}}$ indicates that 
the value is computed in the MDP $\GG_{\straa}$ (which is a MDP for player~$\mi$)
while $\ma_{\GG}$ gives the optimal value for player~$\ma$ in the game~$\GG$.

\smallskip\noindent{\em Modification of the strategy.}
If the strategy $\straa$ is subgame-perfect-for-finite-memory, then the proof is done.
Otherwise, there exists a state $\tuple{s,m'}$ in $\GG_\straa$
with value below the optimal finite-memory value of $s$, 
namely such that $\winvalf{\mi_{\GG_{\straa}}}(\Omega, \tuple{s,m_s}) > \winvalf{\mi_{\GG_{\straa}}}(\Omega, \tuple{s,m'})$.
We construct an \emph{improved} strategy $\straa'$ as follows:
the strategy $\straa'$ plays like $\straa$ except
that when the state $\tuple{s,m'}$ is reached, the strategy $\straa'$ plays
like $\straa$ is playing from state $\tuple{s,m_s}$ (equivalently, we remove the outgoing transitions from state $\tuple{s,m'}$
in $\GG_\straa$, and replace them by a deterministic transition to state $\tuple{s,m_s}$ on all actions
to obtain $\GG_{\straa'}$, as illustrated in \figurename~\ref{fig:proof1}). 
Note that the new strategy $\straa'$ has the same memory set as $\straa$. We show below that
the value of every state in $\GG_{\straa'}$ is at least as large as the value of the 
same state in $\GG_\straa$ $(\star)$. It follows that the value of state $\tuple{s,m'}$ in $\GG_{\straa'}$
is the optimal finite-memory value from $s$, and by repeating the same construction
in every state where the value is below the optimal finite-memory value, we obtain
(in finitely many steps) a subgame-perfect-for-finite-memory strategy for player~$\ma$.

\smallskip\noindent{\em Proof of $(\star)$.}
We proceed with the proof of $(\star)$, which has two steps as mentioned above.
We first define the notion of value class. 

\smallskip\noindent{\em Value class and properties.}
In the MDP $\GG_\straa$, a \emph{value class} is a maximal subset of states that have 
the same value (defined as the infimum over the strategies of player~$\mi$).
The following property holds in $\GG_\straa$, for every state $l = \tuple{\cdot,\cdot}$, and action $a \in A$:
consider the value class of $l$, if there is an $a$-successor of $l$ 
in a lower value class, then there is also an $a$-successor of $l$ in a 
higher value class (\figurename~\ref{fig:proof2}). 
If we consider the partition defined by the value classes
in $\GG_\straa$, this property also holds in the modified MDP $\GG_{\straa'}$ 
corresponding to strategy $\straa'$, because the new deterministic transition 
(dashed edge of  \figurename~\ref{fig:proof1}) goes to a higher value class.

\smallskip\noindent{\em Properties of end-components.}
Now, we claim that in the modified MDP $\GG_{\straa'}$ every end-component
is included in some value class (of the original MDP $\GG_\straa$).
We show this by contradiction (see also \figurename~\ref{fig:proof2}). 
Assume that there is an end-component $C$ in $\GG_{\straa'}$
with non-empty intersection with different value classes (of the original MDP $\GG_\straa$). 
Let $x \in C$ be a state of $C$ with largest value. Since $C$ is strongly connected,
there is a path from $x$ to a lower value class, and on this path there is a state $y \in C$ 
with largest value that has an $a$-successor $z$ with lower value (for some $a \in A_C(y)$). 
It follows that $y$ has also an $a$-successor with higher value, according to the above property. This
successor is outside $C$ since there is no larger value class in $C$ than the value class of $y$. 
This is in contradiction with the fact that end-components are closed sets (and that $a \in A_C(y)$). 
We conclude that in $\GG_{\straa'}$ every end-component
is included in some value class (of the original MDP $\GG_\straa$).
Therefore, the value of each end-component in $\GG_{\straa'}$ is at least as large
as the value of the value class containing it (in $\GG_\straa$). 
It also follows that the new deterministic transitions from $\tuple{s,m'}$ to 
$\tuple{s,m_s}$ do not belong to any end-component in $\GG_{\straa'}$.

\begin{figure}[!tb]
  \begin{center}
    \hrule  height .33pt
    \renewcommand{\sb}[1]{\scalebox{0.75}[1]{#1}}

\begin{picture}(80,46)(0,2)

\gasset{Nw=6,Nh=6,Nmr=3,rdist=1, loopdiam=5}

\node[Nmarks=n,Nw=15,Nh=30,Nmr=0](n1)(10,26){}
\node[Nframe=n,Nw=20,Nh=10,Nmr=0](nl)(10,6){\begin{tabular}{c}lowest\\value class\end{tabular} }
\node[Nframe=n,Nw=0,Nh=0,Nmr=0](nC)(-1,40){$\GG_{\straa}$}
\node[Nmarks=n,Nw=15,Nh=30,Nmr=0](n1)(30,26){}
\node[Nmarks=n,Nw=15,Nh=30,Nmr=0](n1)(50,26){}
\node[Nmarks=n,Nw=15,Nh=30,Nmr=0](n1)(70,26){}
\node[Nframe=n,Nw=20,Nh=10,Nmr=0](nh)(70,6){\begin{tabular}{c}highest\\value class\end{tabular} }
\drawedge[ELpos=50, ELside=l, ELdist=.5, curvedepth=0, linegray=0.5, dash={1}0](nl,nh){increasing}
\drawedge[ELpos=50, ELside=r, ELdist=.5, curvedepth=0, linegray=0.5, dash={1}0](nl,nh){value class}

\node[Nframe=n,Nmarks=n,Nw=0,Nh=0,Nmr=0, ExtNL=y, NLangle=60, NLdist=.5](n1)(54,20){$l$}
\rpnode[Nmarks=n](np)(50,14)(4,1){}
\node[Nframe=n,Nw=3,Nh=3,Nmr=1.5](n2)(70,15){}
\node[Nframe=n,Nw=3,Nh=3,Nmr=1.5](n3)(30,15){}

\drawedge[ELpos=50, ELside=l, curvedepth=0](n1,np){$a$}
\drawedge[ELpos=50, ELside=l, curvedepth=-5](np,n2){}
\drawedge[ELpos=50, ELside=l, curvedepth=5](np,n3){}

\node[Nmarks=n,Nw=43,Nh=14,Nmr=0, linegray=0.5, ExtNL=y, NLangle=162, NLdist=1](nC)(35,38){$C$}

\rpnode[Nmarks=n, ExtNL=y, NLangle=270, NLdist=1](ny)(46,39)(4,1){}
\node[Nframe=n,Nw=0,Nh=0,Nmr=0](nL)(31,37){}
\node[Nframe=n,Nw=3,Nh=3,Nmr=1.5](nR)(60.5,43.5){$\times$}

\node[Nframe=n,Nw=0,Nh=0,Nmr=1.5, ExtNL=y, NLangle=270, NLdist=1](ndummy)(48,35.5){$y$}
\drawedge[AHnb=0, ELpos=50, ELside=l, curvedepth=0, linegray=0.5](ny,ndummy){}
\drawline[AHnb=0, arcradius=0, linegray=0.5](48,35.5)(50,38)(52,36)(55,38)
\node[Nframe=n,Nw=0,Nh=0,Nmr=1.5](nx)(55,39.5){$x$}
\drawline[AHnb=0, arcradius=0, linegray=0.5](31,37)(28,38)(26,35)(24,36)
\node[Nframe=n,Nw=0,Nh=0,Nmr=1.5](nx)(31,35.5){$z$}

\drawedge[ELpos=50, ELside=l, curvedepth=-4](ny,nL){}
\drawedge[ELpos=50, ELside=l, curvedepth=2.5](ny,nR){}





\end{picture}
    \hrule  height .33pt
    \caption{Lemma~\ref{lem:subgame-perfect}: value-class analysis. No end-component $C$ can lie across several value classes.\label{fig:proof2}}
  \end{center}
\end{figure}

\smallskip\noindent{\em Optimal reachability probability.} 
The key steps to obtain the bound on optimal reachability probability is as follows:
we observe that the optimal reachability probability in MDPs is characterized by a minimizing 
linear-programming solution, and we show that the solution before the modification 
is a feasible solution after the modification.
We now present the details.

\smallskip\noindent{\em Optimal value via optimal reachability.}
We show that the value of the state $\tuple{s,m'}$ in $\GG_{\straa'}$ is 
strictly greater than the value of $\tuple{s,m}$ in $\GG_{\straa}$ (for player~$\ma$).
Let $S_{losing}$ be the union of all end-components in $\GG_\straa$ with value $0$
for the prefix-independent objective~$\Omega$ (thus losing for player~$\ma$, and winning for player~$\mi$).
By Remark~\ref{rem:key}, the optimal value for player~$\mi$ in the MDP
is the optimal reachability probability to $S_{losing}$.

\smallskip\noindent{\em Optimal reachability probability to $S_{losing}$.}
Consider the following linear program in $\GG_\straa = \tuple{S', (\emptyset, S'_{\mi}), A, \PP'}$ 
that computes the value (for player~$\mi$) of each state $l \in S'$ of $\GG_{\straa}$ in variable $x_l$,
by solving a reachability problem to the states in $S_{losing}$: 
\begin{itemize}
\item[] minimize $\sum_{l \in S'} x_l$
\item[] $x_l \geq \sum_{k \in S'} \PP'(l,a)(k)\cdot x_k$ for all $l \in S', a \in A$
\item[] $x_l = 1$ for all $l \in S_{losing}$
\end{itemize}
The correctness of the linear program to compute optimal reachability probability 
is standard~\cite{FV97}.
Let $x^*$ be an optimal solution of this linear program. Note that the values
are computed for player~$\mi$, and thus $x^*_l = 1 - \vaf(\Omega, l)$.
It follows that $x^*_{\tuple{s,m_s}} < x^*_{\tuple{s,m'}}$.

\smallskip\noindent{\em Feasible solution.}
Consider the modified MDP $\GG_{\straa'}$ (with same state space as $\GG_\straa$), 
in which the union of end-components with value $0$ is contained in $S_{losing}$. 
Therefore, considering the same linear program for $\GG_{\straa'}$ provides 
an upper bound on the new value (for player~$\mi$).
For each $l\in S'$, define $y_l = \left\{ \begin{array}{ll} x^*_l & \text { if }l \neq \tuple{s,m'} \\  
x^*_{\tuple{s,m_s}} & \text { if }l = \tuple{s,m'}  \end{array}  \right.$

\noindent Then $(y_l)_{l \in S'}$ is a feasible solution to the linear program for $\GG_{\straa'}$,
and for the optimal solution $y^*$, we have $y^*_l \leq y_l \leq x^*_l$ (and
for $l' = \tuple{s,m'}$ we have $y^*_{l'} \leq y_{l'} < x^*_{l'}$). Since 
$y^*_{l'}$ is only an upper bound of the new value of $s$ for player~$\mi$ in $\GG_{\straa'}$,
it shows that the value improved for player~$\ma$ in every state.  
Since the value of $\tuple{s,m_s}$ in $\GG_{\straa}$ was the optimal finite-memory value,
it follows that in $\GG_{\straa'}$ the value of $\tuple{s,m_s}$ is also the optimal finite-memory value.
Since all transitions of $\tuple{s,m'}$ lead to $\tuple{s,m_s}$, the value
of $\tuple{s,m'}$ in $\GG_{\straa'}$ is the optimal finite-memory value from $s$,
which concludes the proof of~$(\star)$.
\end{proof}

The result of~\cite[Theorem~5.2]{GK14} shows that in games where the players 
are allowed to use arbitrary strategies (thus not restricted to finite-memory 
strategies), memoryless optimal strategies exist for player~$\mi$ if the objective 
of player~$\ma$ is prefix-independent and closed under shuffling. 
The proof of this result uses an analogue of Lemma~\ref{lem:subgame-perfect}
for arbitrary strategies, and relies on edge induction, a technique that
became standard~\cite{CD12,GK14,GZ05,Kop06}. 
The shape of the argument is not specific to games with arbitrary strategies:
in games where the players are restricted to finite-memory strategies, 
we can follow the same line of proof (using Lemma~\ref{lem:subgame-perfect}) 
to show that if the objective of a player is prefix-independent and 
closed under shuffling, then memoryless optimal strategies exist for the other player.

\begin{theorem}\label{theo:half-memoryless}
In stochastic games, if the objective $\Omega$ of player~$\ma$ is prefix-independent and
closed under shuffling, and player~$\ma$ is restricted to finite-memory strategies,
then player~$\mi$ has a memoryless optimal-for-finite-memory strategy (as well as a 
memoryless optimal strategy), and determinacy holds under finite-memory strategies.
More precisely, for all states $s$ we have:
$$\vaf(\Omega,s) = \vbf(\Omega,s) =: v(s), \text{and}$$
$$ \sup_{\straa \in \Straa^{FM}} \inf_{\strab \in \Strab} \prob{s}^{\straa,\strab}(\Omega,s) = v(s) = 
\inf_{\strab \in \Strab^{PM}} \sup_{\straa \in \Straa^{FM}} \prob{s}^{\straa,\strab}(\Omega,s).
$$
%
\end{theorem}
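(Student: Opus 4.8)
The plan is to follow the line of proof used in \cite[Theorem~5.2]{GK14}, but replacing the use of $\epsilon$-subgame-perfect strategies for arbitrary strategies with the clean existence of subgame-perfect-for-finite-memory strategies from Lemma~\ref{lem:subgame-perfect}. First I would fix an arbitrary finite-memory strategy $\straa_0$ for player~$\ma$ which, by Lemma~\ref{lem:subgame-perfect}, we may take to be subgame-perfect-for-finite-memory. Against this fixed strategy we obtain the MDP $\GG_{\straa_0}$ for player~$\mi$, whose value function is, by construction, exactly $\vaf(\Omega,\cdot)$ (lifted to the memory-extended state space). The core of the argument is then to use \emph{edge induction} on the edges of $\GG_{\straa_0}$ (or equivalently on the action choices available to player~$\mi$) to extract a pure memoryless strategy $\strab^{PM}$ for player~$\mi$ that is still optimal. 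The shuffle-closure of $\Omega$ is what makes this induction go through: when we fix the behaviour of player~$\mi$ at one state, the plays compatible with the modified MDP can be recovered as shuffles of plays available in the two ``sub-MDPs'' obtained by the case split, so membership in $\Omega$ is preserved.

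The key steps, in order, are as follows. (1)~Apply Lemma~\ref{lem:subgame-perfect} to get a subgame-perfect-for-finite-memory $\straa$ for player~$\ma$; this guarantees $\winvalf{\mi_{\GG_\straa}}(\Omega,\tuple{s,m}) \geq \vaf(\Omega,s)$ for \emph{every} reachable memory value $m$, not just the optimal one, which is exactly the property that makes the edge-induction step legitimate after each modification. (2)~In $\GG_\straa$, define the value classes as in Lemma~\ref{lem:subgame-perfect} and run edge induction on player-$\mi$ edges within a single value class (player~$\mi$ need never leave her own value class when playing optimally, by the value-class property established in the proof of Lemma~\ref{lem:subgame-perfect}); using shuffle-closure and prefix-independence, show that removing one of two outgoing $\mi$-edges at a state does not decrease player~$\mi$'s winning probability, so a pure memoryless optimal choice exists at that state, and iterate over all $\mi$-states. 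This yields $\strab^{PM} \in \Strab^{PM}$ with $\strabvbf(\Omega,s) = \vaf(\Omega,s)$. (3)~For the chain of equalities: $\vbf(\Omega,s) \leq \strabvbf(\Omega,s) = \vaf(\Omega,s)$ gives one inequality, and $\vaf \leq \vbf$ is immediate by the minimax inequality (for finite-memory on both sides, $\sup\inf \leq \inf\sup$), so $\vaf = \vbf =: v(s)$. (4)~For the refined displayed equation: the left quantity $\sup_{\straa \in \Straa^{FM}} \inf_{\strab \in \Strab} \prob{s}^{\straa,\strab}(\Omega)$ is $\leq \vaf(\Omega,s)$ trivially (allowing player~$\mi$ more strategies can only lower the value), and it is $\geq v(s)$ because against the subgame-perfect finite-memory $\straa$ from step~(1), even \emph{arbitrary} strategies of player~$\mi$ cannot do better than the MDP value $\winvalf{\mi_{\GG_\straa}}(\Omega,\cdot) = v(s)$ --- this uses that in the MDP $\GG_\straa$ with a prefix-independent objective, memoryless strategies are optimal for player~$\mi$ (Remark~\ref{rem:key}), so arbitrary strategies of player~$\mi$ gain nothing. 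The right quantity $\inf_{\strab \in \Strab^{PM}} \sup_{\straa \in \Straa^{FM}} \prob{s}^{\straa,\strab}(\Omega)$ is $\leq v(s)$ because $\strab^{PM}$ from step~(2) witnesses it, and it is $\geq v(s)$ since restricting player~$\mi$ to $\Strab^{PM} \subseteq \Strab^{FM}$ only raises $\inf\sup$ above $\vbf = v(s)$.

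The main obstacle I expect is step~(2): making the edge-induction argument precise in the stochastic, memory-extended setting. In the non-stochastic edge-induction arguments of \cite{GZ05,Kop06,CD12}, one removes an edge and argues directly about the resulting plays; here, after fixing $\straa$ we are in an MDP, so ``removing an edge'' means restricting player~$\mi$'s action set at some memory-extended state $\tuple{s,m}$, and I must argue that the MDP value for player~$\mi$ does not decrease. The clean way is: partition the outgoing $a$-successors of $\tuple{s,m}$ according to their value class (by the value-class property all successors lie in the value class of $\tuple{s,m}$ unless player~$\mi$ would be leaving optimally, which—as shown in Lemma~\ref{lem:subgame-perfect}—she need not); then within a value class every state has value $v(s)$, and the shuffle of plays from the two restricted sub-MDPs (one keeping each of the two candidate edges) stays in $\Omega$ by shuffle-closure, so one of the two restrictions preserves the value—a standard averaging/pigeonhole argument on the two sub-MDP values, both of which are $\geq v(s)$ by optimality of $\straa$ after every history. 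Iterating over all $\mi$-states, which terminates because each step strictly reduces the number of available $\mi$-edges, produces the pure memoryless $\strab^{PM}$. I would also remark that this is where shuffle-closure is genuinely needed, consistent with Remark~\ref{rmk:mean-payoff-inf-not-closed-under-shuffling} and the fact (from the excerpt) that determinacy fails for general prefix-independent objectives (Remark~\ref{rem:nodet}).
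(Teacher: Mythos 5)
Your overall architecture --- Lemma~\ref{lem:subgame-perfect} plus an edge/choice induction exploiting shuffle-closure --- is the right one, and your steps (1), (3) and (4) essentially match the paper. But step (2), the heart of the argument, is carried out in the wrong object, and this creates a genuine gap. You fix a single optimal-for-finite-memory strategy $\straa_0$, pass to the MDP $\GG_{\straa_0}$, and extract a pure memoryless best response for player~$\mi$ there. Two things go wrong. First, a strategy that is memoryless on the states $\tuple{s,m}$ of $\GG_{\straa_0}$ is only finite-memory, not memoryless, when read back in $\GG$. Second, and more fundamentally, a best response to the one strategy $\straa_0$ only certifies $\prob{s}^{\straa_0,\strab}(\Omega)=\vaf(\Omega,s)$; it does not bound $\strabvbf(\Omega,s)=\sup_{\straa\in\Straa^{FM}}\prob{s}^{\straa,\strab}(\Omega)$, since a different finite-memory $\straa$ may exploit the fixed $\strab$. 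The theorem needs a single pure memoryless $\strab$ that is \emph{uniformly} good against all finite-memory strategies of player~$\ma$, and that cannot be obtained by best-responding inside one MDP. A symptom of the problem: in an MDP with a prefix-independent objective, player~$\mi$ always has a pure memoryless optimal strategy with no shuffle-closure hypothesis at all, so if your step (2) sufficed, shuffle-closure would be superfluous --- contradicting the counterexample of Remark~\ref{rem:nodet}.

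The paper instead does the induction at the level of the game: it inducts on $\abs{S_{\mi}}$, picks a player-$\mi$ state $\hat{s}$, and forms the games $\GG_a,\GG_b$ in which the choice at $\hat{s}$ is hard-wired to $a$ or $b$; the induction hypothesis supplies memoryless strategies that are optimal (in the $\inf\sup$ sense) in those smaller games. The key inequality~\eqref{eq:value} --- that no strategy of player~$\mi$ in $\GG$ achieves a value below the minimum of its values in $\GG_a$ and $\GG_b$ --- is proved by letting player~$\ma$ play a switching strategy built from subgame-perfect-for-finite-memory strategies $\straa_a,\straa_b$ (this is where Lemma~\ref{lem:subgame-perfect} enters), whose compatible plays are shuffles of plays of $\GG_a$ and $\GG_b$; shuffle-closure then yields the bound \emph{uniformly over all} strategies $\strab$ of player~$\mi$. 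That uniformity, obtained by varying player~$\ma$'s response rather than fixing it once and for all, is exactly what your MDP-level induction loses. If you recast your step (2) as this game-level induction on player-$\mi$ states, the rest of your outline goes through.
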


\smallskip\noindent{\em Significance of Theorem~\ref{theo:half-memoryless}.}
We first remark on the significance of the result, and then present the main steps of the 
proof.
First, the result establishes determinacy for finite-memory strategies 
i.e., $\vaf(\Omega,s) = \vbf(\Omega,s)=v(s)$, which implies that even for finite-memory strategies
the order of sup and inf can be exchanged. 
However, note that the finite-memory value is different from the value under infinite-memory 
strategies, and the determinacy for finite-memory does not follow from the determinacy
for infinite-memory strategies.
Second, $\sup_{\straa \in \Straa^{FM}} \inf_{\strab \in \Strab} \prob{s}^{\straa,\strab}(\Omega,s) = v(s)$
implies that as long as player~$\ma$ is restricted to finite-memory strategies, whether 
player~$\mi$ uses finite-memory or infinite-memory strategies does not matter.
Finally, $v(s)= \inf_{\strab \in \Strab^{PM}} \sup_{\straa \in \Straa^{FM}} \prob{s}^{\straa,\strab}(\Omega,s)$
implies that against finite-memory strategies of player~$\ma$ there exists a pure memoryless
strategy for player~$\mi$ that is optimal (even considering all infinite-memory strategies 
for player~$\mi$).


\smallskip\noindent{\em Main steps of the proof.}
We present the key steps of the proof of Theorem~\ref{theo:half-memoryless}, 
and we show that the argument in the proof of~\cite[Theorem~5.2]{GK14} (which we refer to 
for the precise technical steps) can be adapted for finite-memory strategies.
The key steps are: 
(i)~induction on the number of player-$\mi$ states;
(ii)~creating different games for different choices at a player-$\mi$ state, in which
player~$\mi$ has memoryless optimal strategies by induction hypothesis;
and (iii)~showing the value of the original game is at least the minimum of the 
value of the different games, thus memoryless strategies suffice.

\smallskip\noindent{\em Induction on player-$\mi$ states.}
The proof is by induction on the number of states of player~$\mi$. The base case
$\abs{S_{\mi}} = 0$ corresponds to games with only states of player~$\ma$. 
The result holds trivially in that case (the empty strategy of player~$\mi$ 
is memoryless). For the induction step, assume that the result holds for all games
with $\abs{S_{\mi}} < k$, and consider a game $\GG$ with $\abs{S_{\mi}} = k$. 

\smallskip\noindent{\em Different games for different choices.}
We explain the rest of the proof assuming the action set contains only two actions, 
that is $A = \{a,b\}$. The proof is the same for an arbitrary finite set of actions, 
with more complication in the notation. 
In $\GG$, consider a state $\hat{s} \in S_{\mi}$ of player~$\mi$ and construct
two games $\GG_a$ and $\GG_b$ obtained from $\GG$ by removing $\hat{s}$ and by
replacing the incoming transitions to $\hat{s}$ by transitions to its $a$-successors
and $b$-successors respectively. The transition function of $\GG_x$ (for $x \in \{a,b\}$)
is defined by $\PP_x(s,c)(s') = \PP(s,c)(s') + \PP(s,c)(\hat{s}) \cdot \PP(\hat{s},x)(s')$
for all $s,s' \in S \setminus \{\hat{s}\}$, and all actions $c \in A$.

\smallskip\noindent{\em Value of original game at least the minimum of the value 
of the two games.}
In $\GG_a$ and $\GG_b$ the number of states of player~$\mi$ is $k-1$. 
Hence by the induction hypothesis there exist memoryless strategies $\strab^{\GG_a}$ and $\strab^{\GG_b}$
for player~$\mi$ that are optimal-for-finite-memory (as well as optimal among the infinite-memory strategies) 
in $\GG_a$ and $\GG_b$ respectively.
The proof proceeds by showing that in the game $\GG$, player~$\mi$ cannot obtain
a lower (i.e., better) value than in one of the games $\GG_a$ or $\GG_b$,
that is for all strategies $\strab$ of player~$\mi$, for all states $s \neq \hat{s}$
we have\footnote{We assume that the value $\mystravaf{\strab^{\GG}}(\Omega,s)$ of 
a strategy $\strab^{\GG}$ is computed in the game $\GG$ in superscript.}:
\begin{equation}
\mystravaf{\strab^{\GG}}(\Omega,s)\!\geq\!\min \big\{\mystravaf{\strab^{\GG_a}}(\Omega,s),\mystravaf{\strab^{\GG_b}}(\Omega,s)\big\}.
\label{eq:value}
\end{equation}
To show this, we consider subgame-perfect-for-finite-memory strategies $\straa_a$ and $\straa_b$ 
for player~$\ma$ in games $\GG_a$ and $\GG_b$ respectively (which exist by Lemma~\ref{lem:subgame-perfect}), 
and we construct
a finite-memory strategy $\straa$ in $\GG$ that achieves, against all strategies 
$\strab$, a value at least as large as either $\straa_a$ in $\GG_a$ or $\straa_b$ 
in $\GG_b$. Intuitively, $\straa$ switches between $\straa_a$ and $\straa_b$, 
playing according to $\straa_a$ when in the last visit to $\hat{s}$ player~$\mi$ 
played action $a$ (thus as in $\GG_a$),
and playing according to $\straa_b$ when in the last visit to $\hat{s}$ player~$\mi$ 
played action $b$ (thus as in $\GG_b$). 
To formally define $\straa$, given a play prefix in $\GG$ we use projections 
onto plays in $\GG_a$ (resp., $\GG_b$) that erase all sub-plays between
successive visits to $\hat{s}$ where action~$b$ (resp., action~$a$) was played in $\hat{s}$.
Note that $\straa$ uses finite memory.
The plays compatible with $\straa$ and $\strab$ are shuffles of plays compatible 
with $\straa_a$ in $\GG_a$ and plays compatible with $\straa_b$ in $\GG_b$,
and since the objective $\Omega$ is closed under shuffling, the probability
measure of the plays satisfying the objective in $\GG$ is no lower than
the value of either games $\GG_a$ or $\GG_b$:
$$\prob{s}^{\straa,\strab}(\Omega) \geq \min \big\{\mystravaf{\strab^{\GG_a}}(\Omega,s),\mystravaf{\strab^{\GG_b}}(\Omega,s)\big\}.$$
It follows that \eqref{eq:value} holds, and thus the optimal-for-finite-memory (as well as optimal among 
infinite-memory strategies) strategies 
in the games $\GG_a$ and $\GG_b$ (extended to play $a$ and $b$ respectively in $\hat{s}$) are sufficient
for player~$\mi$ in $\GG$. Therefore by the induction hypothesis, memoryless strategies
are sufficient for player~$\mi$ to achieve the optimal finite-memory value, let 
$\strab$ be such a strategy.
By the same argument and using the induction hypothesis, for the
finite-memory strategy $\straa$ for player~$\ma$ in $\GG$ we have 
$\straava(\Omega,s)=\straavaf(\Omega,s)=\strabvbf(\Omega,s)$, 
which gives $\vaf(\Omega,s) = \vbf(\Omega,s)$.
Note that our proof handled that the strategies for player~$\mi$ are allowed to be 
infinite-memory, and the result still holds.

\begin{figure}[!tb]
  \begin{center}
    \hrule  height .33pt
\begin{picture}(30,12)(0,1)

\gasset{Nw=6,Nh=6,Nmr=3,rdist=1, loopdiam=5}

\node[Nmarks=n](n1)(5,8){$s_1$}
\node[Nmarks=n, ExtNL=y, NLangle=270, NLdist=1](n1)(5,8){$1$}
\node[Nmarks=r, Nmr=0](n2)(25,8){$s_2$}
\node[Nmarks=r, Nmr=0, ExtNL=y, NLangle=270, NLdist=1](n2)(25,8){$-1$}

\drawloop[ELside=l, loopCW=y, loopangle=180](n1){}
\drawedge[ELpos=50, ELside=l, curvedepth=4](n1,n2){}

\drawedge[ELpos=50, ELside=l, curvedepth=4](n2,n1){}
\drawloop[ELside=l, loopCW=y, loopangle=0](n2){}





\end{picture}
    \hrule  height .33pt
    \caption{A game with prefix-independent objective $\Buchi(s_2) \land (\coBuchi(s_2) \lor \MeanSup)$ that is not determined under finite-memory strategies.\label{fig:not-determined}}
  \end{center}
\end{figure}

\begin{remark}\label{rem:nodet}
The determinacy result of Theorem~\ref{theo:half-memoryless},
which allows to switch the $\sup$ and $\inf$ operators ranging over finite-memory
strategies, is true for prefix-independent shuffle-closed objectives.
We present an example to show that such a result does not hold  
for general prefix-independent objectives that are not closed under shuffling. 
Consider the game of \figurename~\ref{fig:not-determined}, 
with the objective  $\Omega = \Buchi(s_2) \land (\coBuchi(s_2) \lor \MeanSup)$
where $\Buchi(s_2)$ is the set of plays that visit $s_2$ infinitely often, 
and $\coBuchi(s_2)$ is the set of plays that eventually stay in $s_2$ forever. 
Note that the game is even non-stochastic.
We show that $\vaf(\Omega,s_1) = 0$ and $\vbf(\Omega,s_1) = 1$. Intuitively, 
after either player fixed a finite-memory strategy, the other player can win
using slightly more memory than the first player (but still finite memory).
For all finite-memory strategies $\straa$ of player~$\ma$, either $(i)$ there exists a compatible
play that eventually stays forever in $s_1$, and then the objective $\Buchi(s_2)$
is violated, or $(ii)$ $s_2$ is visited infinitely often in all compatible plays 
and player~$\mi$ can ensure with a finite-memory strategy that both objectives $\MeanSup$ and 
$\coBuchi(s_2)$ are violated by staying in $s_2$ one more time than player~$\ma$ 
stayed in $s_1$, and then going back to $s_1$. It follows that $\straavaf(\Omega,s_1) = 0$.
Analogously, against all finite-memory strategies $\strab$ of player~$\mi$,
player~$\ma$ can ensure that the objective $\Omega$ is satisfied (by staying
in $s_1$ one more time than player~$\mi$ stayed in $s_2$, and then going to $s_2$), thus 
$\strabvbf(\Omega,s_1) = 1$. Hence $\vaf(\Omega,s_1) \neq \vbf(\Omega,s_1)$ and the
game of \figurename~\ref{fig:not-determined} is not determined under finite-memory strategies.
\end{remark}

\smallskip\noindent{\em Upper bound on memory.} 
We now show that for prefix-independent shuffle-closed objectives, 
the memory required for player~$\ma$ is exponential as compared
to the memory required for the same objective in MDPs.
If there are $k$ states for player~$\mi$, then the optimal-for-finite-memory 
strategy $\straa$ constructed for player~$\ma$ in the proof of Theorem~\ref{theo:half-memoryless}
is as follows: it considers strategies in the choice-fixed games 
($\GG_a$ and $\GG_b$) with $k-1$ states for player~$\mi$, and the strategy in 
the original game considers projections of plays and then copies the strategies 
of the choice-fixed games. 
Thus the memory required for player~$\ma$ in games with $k$ states for player~$\mi$ is the union
of the memory required for the choice-fixed games with $k-1$ states, and there 
are at most $\abs{A}$ such choice-fixed games.
If we denote by $M(k)$ the memory required for player~$\ma$ in games with $k$ 
player-$\mi$ states, then the following recurrence is satisfied:
\[
M(k) = \abs{A} \cdot M(k-1). 
\]
Note that $M(0)$ represents the memory bound for MDPs, and thus we get a bound on 
$M(k)=\abs{A}^k \cdot M(0)$ in games that is greater than the 
memory bound for MDPs by an exponential factor.

\begin{theorem}\label{theo:membou}
In stochastic games with a prefix-independent shuffle-closed objective $\Omega$, 
an upper bound on the memory required for optimal-for-finite-memory strategies
is $\abs{A}^{\abs{S_{\mi}}} \cdot M(0)$, where $M(0)$ is an upper bound on memory
required for objective $\Omega$ in MDPs.
\end{theorem}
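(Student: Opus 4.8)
The plan is a short induction on $\abs{S_{\mi}}$ that reuses verbatim the strategy construction from the proof of Theorem~\ref{theo:half-memoryless} and merely reads off the memory size at each step. For the base case $\abs{S_{\mi}} = 0$ the game is an MDP for player~$\ma$, and $M(0)$ is by definition an upper bound on the memory of an optimal-for-finite-memory strategy. For the inductive step I would fix $k \geq 1$, assume the bound for all games with at most $k-1$ Min-states, and take $\GG$ with $\abs{S_{\mi}} = k$. Choosing a Min-state $\hat s$ and forming the choice-fixed games $\{\GG_a\}_{a \in A}$ (each with $k-1$ Min-states), the induction hypothesis, together with Lemma~\ref{lem:subgame-perfect}, gives a subgame-perfect-for-finite-memory strategy $\straa_a$ for player~$\ma$ in each $\GG_a$ whose memory set $M_a$ has size at most $M(k-1)$.

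Next I would argue that the composite strategy $\straa$ built in Theorem~\ref{theo:half-memoryless} has a memory set of size at most $\abs{A}\cdot M(k-1)$. Recall that $\straa$ plays according to $\straa_a$ exactly when the last action played by player~$\mi$ at $\hat s$ was $a$, using the projection onto $\GG_a$-plays that erases every sub-play lying between two consecutive visits to $\hat s$ in which some action other than $a$ was played. The point is that, to execute this, player~$\ma$ needs only to remember which branch is currently active and the current memory value of that single branch's strategy (advanced along the non-erased part of the play); the inactive branches contribute nothing, precisely because their sub-plays are erased by the projection and the branch is restarted afresh whenever it becomes active again. Hence $\straa$ can be implemented with memory $\bigcup_{a \in A} M_a$, taken as a disjoint union so that the active component also names the branch, which has size at most $\abs{A}\cdot M(k-1)$. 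This is exactly the recurrence $M(k) = \abs{A}\cdot M(k-1)$ obtained above, and unrolling it from the base case yields $M(k) = \abs{A}^{k}\cdot M(0)$, i.e.\ the bound $\abs{A}^{\abs{S_{\mi}}}\cdot M(0)$ of Theorem~\ref{theo:membou}.

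The one point requiring care — and the only place the argument is not purely mechanical — is the handover when player~$\mi$ switches action at $\hat s$: upon entering branch $\straa_a$ at some $a$-successor $t$ of $\hat s$, the composite strategy must reinitialize $\straa_a$ so that it is optimal from $t$ onward regardless of past history. This is where subgame-perfection (for finite memory) of $\straa_a$ is essential rather than mere optimality from the start state of $\GG_a$: the shifted strategy $\straa_{a,\rho}$ is optimal from $\Last(\rho)$ while living on the same memory set $M_a$, so restarting $\straa_a$ at $t$ costs no extra memory. Everything else — that the plays produced by $\straa$ against any $\strab$ are genuine shuffles of $\GG_a$- and $\GG_b$-plays, so that shuffle-closure of $\Omega$ applies — is already established in the proof of Theorem~\ref{theo:half-memoryless}, so the main obstacle here is conceptual rather than technical: being confident that maintaining only the active branch, and not all $\abs{A}$ branches simultaneously, suffices, and in particular that no cross-branch synchronization is ever needed.
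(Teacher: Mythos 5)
Your proof follows the paper's argument exactly: the paper obtains the same recurrence $M(k) = \abs{A}\cdot M(k-1)$ by reading the memory off the inductive construction of Theorem~\ref{theo:half-memoryless}, where the memory of the composite strategy is the (disjoint) union of the memory sets of the $\abs{A}$ choice-fixed games with $k-1$ player-$\mi$ states, and then unrolls it to $\abs{A}^{\abs{S_{\mi}}}\cdot M(0)$. Your additional observation --- that only the \emph{active} branch's memory value need be maintained, with subgame-perfection-for-finite-memory of each $\straa_a$ licensing a restart on every re-entry rather than a resume (which would otherwise force a \emph{product} of the branch memories) --- makes explicit a bookkeeping point the paper asserts only implicitly by saying ``union'', so this is a sharpening of the same argument rather than a different route.
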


\section{Generalized Mean-Payoff Objectives under Finite-Memory Strategies}\label{sec:finmem}

In generalized-mean-payoff games, infinite-memory strategies are more powerful 
than finite-memory strategies, even in 1-player games with only deterministic 
transitions, i.e., graphs~\cite[Lemma~7]{VCDHRR15}.\footnote{In the example of~\cite[Lemma~7]{VCDHRR15}
all finite-memory strategies have winning probability~0 while there exists an almost-sure 
winning strategy (with infinite memory).}
It follows that in general $\va(\Omega, s) \neq \vaf(\Omega, s)$ in generalized-mean-payoff games
(for both $\Omega = \MeanSup$ and $\Omega = \MeanInf$).
In this section, we consider the value problem for finite-memory
strategies, and present complexity results showing that the problem is in PTIME for
MDPs, and is coNP-complete for games.
Finally we present optimal bounds for memory required in 2\half-player games.

\subsection{Generalized mean-payoff objectives under finite-memory in MDPs}

We consider the value problem for finite-memory strategies
in MDPs with generalized mean-payoff objectives. 
First we show that randomized memoryless strategies are as powerful as 
finite-memory strategies, and then using this result we show that the 
value problem can be solved in polynomial time.

Note that in finite-state Markov chains with a fixed reward function, from all states $s$, 
the probability that the conjunction $\MeanSup$ of mean-payoff-sup objectives holds from $s$ is 
the same as the probability that the conjunction $\MeanInf$ of mean-payoff-inf objectives holds 
from $s$~\cite{FV97}. It follows that in MDPs with finite-memory strategies, the 
value for mean-payoff-sup and mean-payoff-inf objectives coincides, thus
$\vaf(\MeanSup, s) = \vaf(\MeanInf, s)$ for all states $s$.

\smallskip\noindent{\em Key ideas.}
Let $M = \tuple{S, A, \PP}$ be an MDP and $\rwd: S \to \real^k$
be a reward function. The key ideas to show that randomized memoryless strategies
are sufficient for generalized mean-payoff objectives are: 
(i)~first observe that the mean-payoff value of a play depends only on the frequency of 
occurrence of each state, (ii)~under finite-memory strategies the frequencies are well defined 
(with probability~1) for each state and action, and (iii)~given the frequencies of a
finite-memory strategy, a randomized memoryless strategy that plays at every state an 
action with probability proportional to the given frequencies achieves the same frequencies
as the finite-memory strategy.

\smallskip\noindent{\em Definition of frequencies.}
In this paragraph, we assume that plays are sequences of alternating
states and actions, and the probability measure $\prob{s}^{\straa}(\cdot)$
is (uniquely) defined over Borel sets of such sequences.
Given a play $\rho = s_0 a_0 s_1 a_1 \dots \in (S A)^\omega$, let $N_i^{\rho}(s)$ be the 
number of occurrences of state $s$ in $\rho$ up to position $i$, and
let $N_i^{\rho}(s,a)$ be the number of occurrences of the pair $(s,a)$ of state $s$
followed by action $a$ in $\rho$ up to position~$i$.
The frequency $f_s^\rho$ of a state $s$ in $\rho$ and 
the frequency $f_{s,a}^\rho$ of a pair $(s,a)$ are:
$$f_s^\rho = \lim_{i \to \infty} \frac{N_i^{\rho}(s)}{i} \text{ (if the limit exists),}$$ 
$$f_{s,a}^\rho = \lim_{i \to \infty} \frac{N_i^{\rho}(s,a)}{i} \text{ (if the limit exists)}.$$

\smallskip\noindent{\em Mean-payoff values from frequencies.}
It is easy to see that if the frequency $f_s^\rho$ is defined, then 
the mean-payoff values $v_j = \limsup_{n \to \infty} \frac{1}{n} \cdot \sum_{i = 0}^{n-1}\rwd_j(s_{i})$
of $\rho$ (for $j=1,\ldots,k$) can be computed from the values of $f_s^\rho$
as follows: $v_j = \sum_{s \in S} f_s^\rho \cdot \rwd_j(s)$.

\smallskip\noindent{\em Almost-sure existence of frequencies.}
We note that in finite-state Markov chains (and in finite-state MDPs with finite-memory
strategies) the frequencies are defined with probability~1~\cite{FV97}:
$$\prob{s}^{\straa}(\{\rho \mid f_s^\rho \text{ and } f_{s,a}^\rho \text{ are defined for all } s \in S \text{ and } a \in A \}) = 1.$$
Given a finite-memory strategy $\straa$ in an MDP $M$, consider a closed recurrent set $B$
of the Markov chain $M_{\straa}$ obtained by playing $\straa$ in 
$M$\footnote{The state space of $M_{\straa}$ is the Cartesian product of the set
of states of $M$ and of the memory set of $\straa$. However, 
we say that $s \in B$ if $(s,m)$ is a state of $B$ for some memory value $m$ of $\straa$.}. 
Then, given $s_0 \in B$ the frequencies are fixed in $B$, that is there exist (unique) frequency 
vectors $\lambda: B \to [0,1]$ and $\mu: B \times A \to [0,1]$ such that:
$$
\begin{array}{l}
\prob{s_0}^{\straa}(\{\rho \mid \frac{f_{s,a}^\rho}{f_s^\rho} = \mu(s,a) \text{ for all } s \in B \text{ and } a \in A \}) = 1, \text{ and } \\[+3pt]
\prob{s_0}^{\straa}(\{\rho \mid f_s^\rho = \lambda(s) \text{ for all } s \in B \}) = 1.
\end{array}
$$

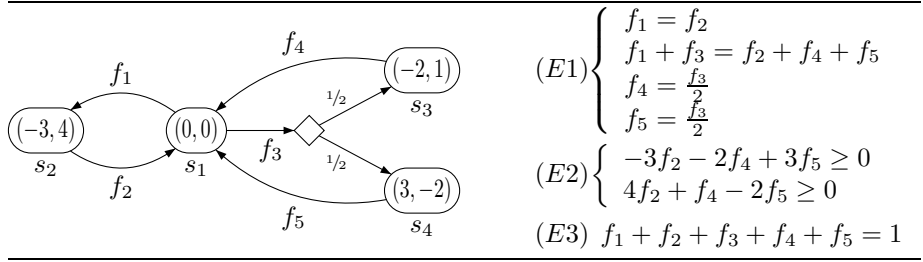
\begin{figure}[!tbp]
  \centering
  \hrule height .33pt
\begin{picture}(120,34)(0,-1)

\gasset{Nw=10,Nh=6,Nmr=3,rdist=1, loopdiam=5}

\node[Nmarks=n, Nw=8](n1)(25,16){\sb{$(0,0)$}}
\node[Nmarks=n](n2)(5,16){\sb{$(-3,4)$}}
\rpnode[Nmarks=n](np)(40,16)(4,2){}
\node[Nmarks=n](n3)(55,24){\sb{$(-2,1)$}}
\node[Nmarks=n](n4)(55,8){\sb{$(3,-2)$}}

\nodelabel[ExtNL=y, NLangle=270, NLdist=1](n1){$s_1$}
\nodelabel[ExtNL=y, NLangle=270, NLdist=1](n2){$s_2$}
\nodelabel[ExtNL=y, NLangle=270, NLdist=1](n3){$s_3$}
\nodelabel[ExtNL=y, NLangle=270, NLdist=1](n4){$s_4$}


\drawedge[ELpos=50, ELside=r, curvedepth=-5](n1,n2){$f_1$}
\drawedge[ELpos=50, ELside=r, curvedepth=-5](n2,n1){$f_2$}

\drawedge[ELpos=66, ELside=r, curvedepth=0](n1,np){$f_3$}
\drawedge[ELpos=30, ELside=l, ELdist=.5, curvedepth=0](np,n3){{\scriptsize $\sfrac{1}{2}$}}
\drawedge[ELpos=30, ELside=r, ELdist=.5, curvedepth=0](np,n4){{\scriptsize $\sfrac{1}{2}$}}

\drawedge[ELpos=50, ELside=r, curvedepth=-5](n3,n1){$f_4$}
\drawedge[ELpos=50, ELside=l, curvedepth=5](n4,n1){$f_5$}


\put(70,24){\makebox(0,0)[l]{$(E\ref{n1})$}}
\put(77,24){\makebox(0,0)[l]{$\left\{\begin{array}{l}f_1 = f_2 \\
f_1 + f_3 = f_2 + f_4 + f_5 \\
f_4 = \frac{f_3}{2}\\
f_5 = \frac{f_3}{2}\\
\end{array}\right.$}}

\put(70,10){\makebox(0,0)[l]{$(E\ref{n2})$}}
\put(77,10){\makebox(0,0)[l]{\hspace{1pt}$\left\{\begin{array}{l}-3 f_2 - 2 f_4 + 3 f_5 \geq 0 \\
4 f_2 + f_4 - 2 f_5 \geq 0 \\
\end{array}\right.$}}

\put(70,2){\makebox(0,0)[l]{$(E\ref{n3})$}}
\put(77,2){\makebox(0,0)[l]{$\phantom{\{}f_1 + f_2 + f_3 + f_4 + f_5 = 1$}}





\end{picture}
  \hrule height .33pt
    \caption{Linear program for an MDP with two-dimensional mean-payoff objective (the 
	constraints $f_i \geq 0$ for $i=1,\dots,5$ are omitted in the figure).\label{fig:frequency}}
\end{figure}

\begin{figure}[!tbp]
  \centering
  \hrule height .33pt  

\begin{picture}(50,23)(0,2)

\gasset{Nw=10,Nh=6,Nmr=3,rdist=1, loopdiam=5}

\node[Nmarks=n](n1)(5,12){\sb{$(-1,1)$}}
\node[Nmarks=n, Nw=11](n2)(25,12){\sb{$(-1,-1)$}}
\node[Nmarks=n](n3)(45,12){\sb{$(1,-1)$}}

\nodelabel[ExtNL=y, NLangle=270, NLdist=1](n1){$s_1$}
\nodelabel[ExtNL=y, NLangle=270, NLdist=1](n2){$s_2$}
\nodelabel[ExtNL=y, NLangle=270, NLdist=1](n3){$s_3$}

\drawloop[ELside=l, loopCW=y](n1){$f_1$}
\drawloop[ELside=l, loopCW=y](n3){$f_6$}

\drawedge[ELpos=50, ELside=r, curvedepth=-5](n1,n2){$f_3$}
\drawedge[ELpos=50, ELside=r, curvedepth=-5](n2,n1){$f_2$}
\drawedge[ELpos=50, ELside=l, curvedepth=5](n3,n2){$f_5$}
\drawedge[ELpos=50, ELside=l, curvedepth=5](n2,n3){$f_4$}






\end{picture}
  \hrule height .33pt
   \caption{The (disjoint) union of two end-components corresponds to a solution 
of $\LP$ ($f_1 = f_6 = \frac{1}{2}$ and $f_2 = f_3 = f_4 = f_5 = 0$). However, no single end-component is a solution.\label{fig:frequency-not-connected}} 
\end{figure}
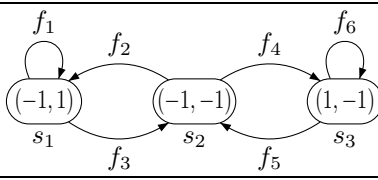

\smallskip\noindent{\em Strategies given frequencies.}
In the closed recurrent set $B$, consider a finite-memory strategy $\straa'$ that plays actions
with the same frequencies as $\straa$, that is such that:
$$
\prob{s_0}^{\straa'}(\{\rho \mid \frac{f_{s,a}^\rho}{f_s^\rho} = \mu(s,a) \text{ for all } s \in B \text{ and } a \in A \}) = 1
$$
Then the frequencies of the states in $B$ are given by $\lambda$:
$$
\prob{s_0}^{\straa'}(\{\rho \mid f_s^\rho = \lambda(s) \text{ for all } s \in B \}) = 1.
$$

\smallskip\noindent{\em Sufficiency of randomized memoryless strategies.}
Now, consider a randomized memoryless strategy $\bar{\straa}$ such that
$\bar{\straa}(s) = d_s$ for all $s \in B$ where $d_s \in \DD(A)$ is such that 
$d_s(a) = \mu(s,a)$ for all $a \in A$. Thus $\bar{\straa}$ plays each action 
with the frequencies given by $\mu$, as in $\straa$. 
It follows that 
$$
\prob{s_0}^{\bar{\straa}}(\{\rho \mid \frac{f_{s,a}^\rho}{f_s^\rho} = \mu(s,a) \text{ for all } s \in B \text{ and } a \in A \}) = 1,
$$
and thus
$$
\prob{s_0}^{\bar{\straa}}(\{\rho \mid f_s^\rho = \lambda(s) \text{ for all } s \in B \}) = 1.
$$
Since the mean-payoff value depends only on the frequency of each state,
it follows that $\prob{s_0}^{\straa}(\MeanSup) = \prob{s_0}^{\bar{\straa}}(\MeanSup)$
for all $s_0 \in B$, and $\bar{\straa}$ is a randomized memoryless strategy.
Thus randomized memoryless strategies can achieve the same values as arbitrary
finite-memory strategies.
Note that the result holds for conjunctions of mean-payoff-inf objectives 
as well (by the same proof).
By Remark~\ref{rem:key}
the winning probability from an initial state is the maximum probability to
reach end-components with value~$1$, which is obtained by a pure memoryless strategy.
It follows that randomized memoryless strategies are sufficient in MDPs
with mean-payoff objectives to realize the finite-memory value.


\begin{lemma} \label{lem:randomized-memoryless}
In all MDPs with a generalized mean-payoff objective, there exists 
an optimal-for-finite-memory strategy that is randomized memoryless.
\end{lemma}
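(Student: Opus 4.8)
The plan is to follow the reduction to end-components suggested by Remark~\ref{rem:key}, combined with the frequency argument sketched in the surrounding text, while being careful about two non-routine points: reproducing the state-visit frequencies of a finite-memory strategy by a randomized memoryless one, and gluing the resulting local strategies with a pure memoryless reachability strategy into one globally optimal randomized memoryless strategy.

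\emph{Reduction to end-components.} Call an end-component $U$ \emph{winning} if $\vaf(\MeanInf,u)=1$ for some, equivalently every, $u\in U$; the equivalence holds because $U$ is strongly connected and $\MeanInf$ is prefix-independent, so from any $u'\in U$ one reaches $u$ inside $U$ with a pure memoryless sub-strategy and then plays the value-$1$ strategy. The finite-memory value inside the sub-MDP on $U$ is moreover either $0$ or $1$: any finite-memory strategy on $U$ induces a finite Markov chain, each of whose closed recurrent sets has a well-defined frequency vector, and $\MeanInf$ holds almost surely on such a recurrent set iff that vector has nonnegative mean in every dimension (since the $\liminf$-averages depend only on the state frequencies); hence if some recurrent set of some finite-memory strategy is good, strong connectivity of $U$ lets one first reach a good recurrent set almost surely and obtain value $1$, and otherwise every recurrent set of every finite-memory strategy is bad and the value is $0$. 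Let $W$ be the union of the winning end-components. Using Lemma~\ref{lem:end-component} and prefix-independence one gets $\vaf(\MeanInf,s)=\max_{\straa\in\Straa^{FM}}\prob{s}^{\straa}(\{\rho\mid \text{$\rho$ reaches $W$}\})$: along any finite-memory play satisfying $\MeanInf$ the almost-surely-visited states form a recurrent set with a good frequency vector, hence a winning end-component, so the play reaches $W$; and this maximum is attained by a pure memoryless strategy $\straa_r$ by classical MDP reachability.

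\emph{Inside a winning end-component.} Fix a winning end-component $U$ and a finite-memory strategy $\straa_U$ of value $1$ from every state of $U$; each of its closed recurrent sets satisfies $\MeanInf$ almost surely. Pick one such recurrent set and let $B\subseteq S$ be its state-projection. In the finite Markov chain induced by $\straa_U$, the frequencies $f_s^\rho$ and $f_{s,a}^\rho$ exist with probability $1$ and on $B$ equal fixed vectors $\lambda\colon B\to[0,1]$ and $\mu\colon B\times A\to[0,1]$ (the action-conditional frequencies). Define the randomized memoryless strategy $\bar{\straa}_U$ on $B$ by $\bar{\straa}_U(s)(a)=\mu(s,a)$. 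First, $\bar{\straa}_U$ keeps plays in $B$, because every action $a$ with $\mu(s,a)>0$ satisfies $\Supp(\PP(s,a))\subseteq B$ (it is an action taken with positive probability from $s$ inside the recurrent set, whose successors stay in the recurrent set). Second, the Markov chain induced by $\bar{\straa}_U$ on $B$ is irreducible and has $\lambda$ as its unique stationary distribution; this is where the main computation lies, obtained by projecting the stationarity equations of the recurrent chain of $\straa_U$ onto $B$ and using that positive-probability edges of a finite irreducible chain carry positive frequency. Consequently the state frequencies under $\bar{\straa}_U$ from any $s_0\in B$ are $\lambda$ almost surely, so, since mean-payoff values depend only on state frequencies, $\bar{\straa}_U$ satisfies $\MeanInf$ almost surely from every state of $B$. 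Finally extend $\bar{\straa}_U$ to $U\setminus B$ by a pure memoryless strategy reaching $B$ almost surely while staying in $U$ (available since $U$ is an end-component and $B\neq\emptyset$); by prefix-independence, $\bar{\straa}_U$ has value $1$ from every state of $U$.

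\emph{Gluing.} Let $\straa^*$ be the randomized memoryless strategy that plays $\bar{\straa}_U(s)$ when $s$ lies in a winning end-component $U$ (fix one such $U$), and $\straa_r(s)$ otherwise. Since each $\bar{\straa}_U$ traps the play in $U\subseteq W$ forever and $\straa^*$ agrees with $\straa_r$ off $W$, the probability of reaching $W$ from $s$ under $\straa^*$ equals that under $\straa_r$, namely $\vaf(\MeanInf,s)$; and conditioned on reaching $W$, hence some winning $U$, the objective $\MeanInf$ holds almost surely. Therefore $\prob{s}^{\straa^*}(\MeanInf)\ge\vaf(\MeanInf,s)$ for every $s$, so $\straa^*$ is an optimal-for-finite-memory strategy that is randomized memoryless; the same statement for the conjunction of mean-payoff-sup objectives follows because $\vaf(\MeanSup,s)=\vaf(\MeanInf,s)$ in MDPs. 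The step I expect to be the main obstacle is the irreducibility-and-stationarity claim in the third paragraph --- verifying that a randomized memoryless strategy reproducing the action-conditional frequencies $\mu$ reproduces the full state-frequency vector $\lambda$, rather than merely some frequency vector --- together with making precise that the memoryless ``commitment'' upon entering $W$ in the gluing step loses no reachability probability.
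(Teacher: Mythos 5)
Your proof is correct and follows essentially the same route as the paper: reproduce the state--action frequencies of a finite-memory strategy inside a closed recurrent set by the randomized memoryless strategy $\bar{\straa}(s)(a)=\mu(s,a)$, then reduce the global problem to pure-memoryless reachability of the winning end-components via Remark~\ref{rem:key}; the stationarity computation you flag as the main obstacle does go through exactly as you describe. The only detail worth adding is to take the winning end-components \emph{maximal} (hence pairwise disjoint), so that the glued strategy $\straa^*$ is unambiguously defined and genuinely traps the play inside a single end-component rather than wandering between overlapping ones.
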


\paragraph{Polynomial-time algorithm}
We present a polynomial-time algorithm to compute the value
in  generalized mean-payoff MDPs with finite-memory strategies.
The key steps of the algorithm are:
\begin{compactitem}
\item The algorithm determines all end-components with value~$1$ (the 
winning end-components), and then computes the maximum probability to reach 
the union of the winning end-components (see Remark~\ref{rem:key}). 

\item The first step to obtain the winning end-components is to define a 
linear program based on the frequencies that
gives a union of end-components with frequencies
that satisfy the generalized mean-payoff objective.
However, this union of end-components itself may not be connected,
even though it is part of a larger end-component.
In the infinite-memory strategy case, the paths between the union of
end-components can be used with vanishing frequency to ensure
the generalized mean-payoff objectives.
However, for finite-memory strategies connectedness of the union of the 
end-components must be ensured.
We show how to combine the linear program with a graph-based algorithm
to ensure connectedness and get a polynomial-time algorithm.

\end{compactitem}

\smallskip\noindent{\em Frequency-based linear program.}
It is known that the winning probability for reachability objectives can be computed in polynomial time using a reduction
to linear programming~\cite{FV97}. To complete the proof, we present a solution 
to compute the winning end-components in polynomial time.
Our approach extends a technique for finding in a graph a cycle with sum 
of rewards equal to zero in all dimensions~\cite{KS88}.
First, we present a linear program $\LP$ to find a union of end-components with nonnegative
sum of rewards (the end-components may be disjoint). 
The variables $f_{s,a}$ represent the frequency of playing action $a$ in state $s$. 
The linear program $\LP$ consists of the following constraints (see also \figurename~\ref{fig:frequency}):

\renewcommand{\theenumi}{(\arabic{enumi})}
\begin{enumerate}[(E1)]
\item for each $s \in S$: $\sum_{a \in A} f_{s,a} = \sum_{t \in S} \sum_{a \in A} f_{t,a} \cdot \PP(t,a)(s)$ \label{n1} \\[-6pt]

\item $\sum_{s \in S} \sum_{a \in A} f_{s,a} \cdot \rwd(s) \geq 0$ (component-wise) \label{n2} \\[-6pt]

\item $\sum_{s \in S} \sum_{a \in A} f_{s,a} = 1$ \label{n3} \\[-6pt]

\item for each $s \in S$ and $a \in A$: $f_{s,a} \geq 0$ \label{n4} 

\end{enumerate}
\renewcommand{\theenumi}{\arabic{enumi}}
The equations~(E\ref{n1}) above express that in every state, the incoming frequency is
equal to the outgoing frequency. Equation~(E\ref{n2}) ensures that the mean-payoff
value is nonnegative (in all dimensions). Equations~(E\ref{n3}) and~(E\ref{n4}) 
require that the frequencies are nonnegative and sum up to~$1$.

\smallskip\noindent{\em Illustration.}
In the example of \figurename~\ref{fig:frequency}, a solution to the linear program 
gives for instance $f_1 = \frac{1}{16}$ and $f_3 = \frac{7}{16}$, which corresponds
to a randomized memoryless strategy that chooses from $s_1$ to go to $s_2$ with 
probability $\frac{1}{1+7} = \frac{1}{8}$ and to go to $\{s_3,s_4\}$ with probability
$\frac{7}{1+7} = \frac{7}{8}$. This strategy satisfies the conjunction of mean-payoff objectives 
with probability~1 (it ensures that the long-run average of the rewards is $\frac{1}{32} \geq 0$
in both dimensions).

\smallskip\noindent{\em Issues regarding connectedness.}
Arguments similar to the proof of~\cite[Theorem 2.2]{KS88} show that the linear
program $\LP$ has a solution if and only if there exists a union of end-components in $M$
and associated frequencies with nonnegative sum of rewards. However, this union
of end-components need not to be connected and thus may not be an end-component
(see \figurename~\ref{fig:frequency-not-connected} where the union of 
the end-components $\{s_1\}$ and $\{s_3\}$ corresponds to a solution of $\LP$).
Note that connectedness is not an issue for infinite-memory strategies: 
in the example of \figurename~\ref{fig:frequency-not-connected} there exists 
an infinite-memory strategy to ensure the mean-payoff objectives with probability~1 
(see~\cite[Lemma~7]{VCDHRR15}).

\smallskip\noindent{\em Ensuring connectedness and frequencies.}
To find single end-components with nonnegative sum of rewards, we adapt a 
technique presented in~\cite[Section~3]{KS88}. Construct a graph $G_M$ with set $S$ of vertices,
and for each pair $(s,a) \in S \times A$, if the linear program $\LP \land f_{s,a} > 0$ 
has a solution, add edges $(s,t)$ in $G_M$ for all $a$-successors $t$ of $s$.   
If the graph $G_M$ is strongly connected, then it defines an end-component with nonnegative
sum of rewards in $M$. Otherwise, consider the maximum-scc decomposition of $G_M$, 
and iterate the algorithm in each scc, until the state space reduces to one element. 
The algorithm identifies in this way all (maximal) winning end-components and
arguments similar to~\cite[Theorem~3.3]{KS88} show that this algorithm runs in polynomial
time, as the recursion depth is bounded by the number of states, and the scc 
decomposition ensures that the graphs in each recursive call of a given depth 
are disjoint.


\begin{figure*}[!t]
  \centering
  \hrule  height .33pt
    {\scriptsize 
\begin{picture}(115,28)(0,2)

\gasset{Nw=6,Nh=6,Nmr=3,rdist=1, loopdiam=6}
\gasset{Nw=5,Nh=5,Nmr=2.5,rdist=1, loopdiam=6, linewidth=0.12}

\node[Nmarks=n, Nmr=0](n1)(5,14){$s_1$}
\node[Nmarks=n, Nmr=0](n1L)(13,19){$s^L_1$}
\nodelabel[ExtNL=y, NLangle=90, NLdist=1](n1L){\sb{$(-1,0,0,\dots,0)$}}
\node[Nmarks=n, Nmr=0](n1R)(13,9){$s^R_1$}
\nodelabel[ExtNL=y, NLangle=270, NLdist=1](n1R){\sb{$(0,-1,0,\dots,0)$}}
\node[Nmarks=n, Nmr=0](n2)(21,14){$s_2$}

\node[Nframe=n, Nmr=0](n2L)(30,19){}
\node[Nframe=n, Nmr=0](n2R)(30,9){}

\drawedge[ELpos=50, ELside=l, curvedepth=0](n1,n1L){}
\drawedge[ELpos=50, ELside=l, curvedepth=0](n1,n1R){}
\drawedge[ELpos=50, ELside=l, curvedepth=0](n1L,n2){}
\drawedge[ELpos=50, ELside=l, curvedepth=0](n1R,n2){}
\drawedge[ELpos=50, ELside=l, curvedepth=0, dash={.3 .5}0](n2,n2L){}
\drawedge[ELpos=50, ELside=l, curvedepth=0, dash={.3 .5}0](n2,n2R){}

\node[Nframe=n](dots)(29,14){$\dots$}

\node[Nframe=n, Nmr=0](n2L)(28,19){}
\node[Nframe=n, Nmr=0](n2R)(28,9){}

\node[Nmarks=n, Nmr=0](n2)(37,14){$s_k$}
\drawedge[ELpos=50, ELside=l, curvedepth=0, dash={.3 .5}0](n2L,n2){}
\drawedge[ELpos=50, ELside=l, curvedepth=0, dash={.3 .5}0](n2R,n2){}

\node[Nmarks=n, Nmr=0](n2L)(45,19){$s^L_k$}
\nodelabel[ExtNL=y, NLangle=90, NLdist=1](n2L){\sb{$(0,\dots,0,-1,0)$}}
\node[Nmarks=n, Nmr=0](n2R)(45,9){$s^R_k$}
\nodelabel[ExtNL=y, NLangle=270, NLdist=1](n2R){\sb{$(0,\dots,0,0,-1)$}}
\node[Nmarks=n, Nmr=0](n3)(53,14){$t_0$}

\drawedge[ELpos=50, ELside=l, curvedepth=0](n2,n2L){}
\drawedge[ELpos=50, ELside=l, curvedepth=0](n2,n2R){}
\drawedge[ELpos=50, ELside=l, curvedepth=0](n2L,n3){}
\drawedge[ELpos=50, ELside=l, curvedepth=0](n2R,n3){}

\node[Nmarks=n](t1)(65,14){$t_1$}
\node[Nmarks=n](t1L)(73,19){$t^L_1$}
\nodelabel[ExtNL=y, NLangle=90, NLdist=1](t1L){\sb{$(1,0,0,\dots,0)$}}
\node[Nmarks=n](t1R)(73,9){$t^R_1$}
\nodelabel[ExtNL=y, NLangle=270, NLdist=1](t1R){\sb{$(0,1,0,\dots,0)$}}
\node[Nmarks=n](t2)(81,14){$t_2$}

\node[Nframe=n](t2L)(89,19){}
\node[Nframe=n](t2R)(89,9){}

\drawedge[ELpos=50, ELside=l, curvedepth=0](n3,t1){}
\drawedge[ELpos=50, ELside=l, curvedepth=0](t1,t1L){}
\drawedge[ELpos=50, ELside=l, curvedepth=0](t1,t1R){}
\drawedge[ELpos=50, ELside=l, curvedepth=0](t1L,t2){}
\drawedge[ELpos=50, ELside=l, curvedepth=0](t1R,t2){}
\drawedge[ELpos=50, ELside=l, curvedepth=0, dash={.3 .5}0](t2,t2L){}
\drawedge[ELpos=50, ELside=l, curvedepth=0, dash={.3 .5}0](t2,t2R){}

\node[Nframe=n](dots)(89,14){$\dots$}

\node[Nframe=n](t2L)(88,19){}
\node[Nframe=n](t2R)(88,9){}

\node[Nmarks=n](t2)(97,14){$t_k$}
\drawedge[ELpos=50, ELside=l, curvedepth=0, dash={.3 .5}0](t2L,t2){}
\drawedge[ELpos=50, ELside=l, curvedepth=0, dash={.3 .5}0](t2R,t2){}

\node[Nmarks=n](t2L)(105,19){$t^L_k$}
\nodelabel[ExtNL=y, NLangle=90, NLdist=1](t2L){\sb{$(0,\dots,0,1,0)$}}
\node[Nmarks=n](t2R)(105,9){$t^R_k$}
\nodelabel[ExtNL=y, NLangle=270, NLdist=1](t2R){\sb{$(0,\dots,0,0,1)$}}
\node[Nmarks=n](t3)(113,14){$s_0$}

\drawedge[ELpos=50, ELside=l, curvedepth=0](t2,t2L){}
\drawedge[ELpos=50, ELside=l, curvedepth=0](t2,t2R){}
\drawedge[ELpos=50, ELside=l, curvedepth=0](t2L,t3){}
\drawedge[ELpos=50, ELside=l, curvedepth=0](t2R,t3){}

\drawline[AHnb=1,arcradius=1](113,16.5)(113,28)(5,28)(5,16.5)




\end{picture}
}
  \hrule  height .33pt
    \caption{A family of generalized mean-payoff games where player~$\ma$ (round states) needs 
	exponential memory to win almost-surely (and finite memory is sufficient).\newline \label{fig:exponential}} 
\end{figure*}
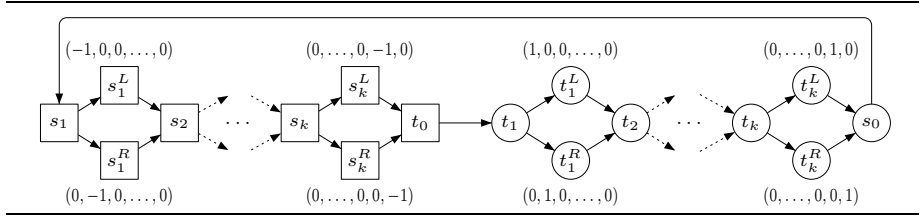

\begin{theorem}\label{theo:mdp} 
The following assertions hold for MDPs with generalized mean-payoff objectives
$\Omega \in \{\MeanSup, \MeanInf\}$:
\begin{enumerate}

\item There exists a randomized memoryless strategy $\straa$ 
such that 
$\vaf(\Omega,s) = \prob{s}^{\straa}(\Omega,s)$ for all states $s$ 
(i.e., randomized memoryless optimal strategies wrt. to finite-memory strategies).

\item The value and value-strategy problems for generalized mean-payoff MDPs 
under finite-memory strategies (i.e., whether $\vaf(\Omega,s) \geq \lambda$)
can be solved in polynomial time.
\end{enumerate}
\end{theorem}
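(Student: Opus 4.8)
The plan is to assemble the ingredients developed above. For assertion~1, recall that Lemma~\ref{lem:randomized-memoryless} already provides, inside each end-component, an optimal-for-finite-memory strategy that is randomized memoryless, and that Remark~\ref{rem:key} reduces the finite-memory value from any state to the maximum probability of reaching the union of the \emph{winning} end-components (those with value~$1$). It therefore remains to glue a reachability strategy to these in-component strategies into a single randomized memoryless strategy~$\straa$: for every state $s$ that lies in some winning end-component~$U$, let $\straa$ play a randomized memoryless strategy witnessing value~$1$ in $U$, all of whose actions have support inside~$U$; for every other state, let $\straa$ play a pure memoryless strategy maximizing the probability of reaching the union $W$ of the winning end-components. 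Since whether the play ever reaches $W$ is determined only by the choices at states outside $W$, and the in-component strategies never leave their component, this memoryless $\straa$ reaches $W$ with the optimal probability and thereafter satisfies $\Omega$ almost surely; hence $\prob{s}^{\straa}(\Omega) = \vaf(\Omega,s)$ for all states~$s$.

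For assertion~2, the algorithm is: (i)~compute the winning end-components of $M$ by running the frequency-based linear program $\LP$ together with the scc-refinement procedure described above (build $G_M$ from the pairs $(s,a)$ for which $\LP \land f_{s,a} > 0$ is feasible; if $G_M$ is strongly connected it is a winning end-component, otherwise recurse into each scc); (ii)~let $W$ be the union of the winning end-components found; (iii)~compute, by the standard reachability linear program~\cite{FV97}, the maximum probability $p$ of reaching $W$ from $s$; (iv)~answer ``yes'' iff $p \geq \lambda$. Every step runs in polynomial time: only $O(|S|\cdot|A|)$ distinct linear programs are solved per recursion level and, as in~\cite[Theorem~3.3]{KS88}, the recursion has depth at most $|S|$ with pairwise-disjoint subproblems at each level, while the reachability linear program is classical. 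For the value-strategy problem one additionally outputs the witness strategy from assertion~1, reading its randomized-memoryless choices off the optimal solutions of the linear programs~$\LP$.

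The point that needs the most care, and the one I expect to be the main obstacle, is the correctness of the $\LP$/scc procedure, i.e.\ the adaptation of~\cite[Theorems~2.2 and~3.3]{KS88} from graphs to MDPs: one must show that $\LP$ is feasible iff some union of end-components carries a nonnegative-reward frequency vector, and that the scc-refinement eventually isolates genuine end-components on which the induced randomized memoryless strategy satisfies the generalized mean-payoff objective with probability~$1$. The subtlety is exactly the connectedness issue of \figurename~\ref{fig:frequency-not-connected}: a feasible frequency vector may be supported on a disconnected union of end-components, so one has to argue that restricting to the edges that can carry positive frequency and decomposing into sccs preserves the flow constraints~(E\ref{n1}) on each surviving scc, that a strongly connected subgraph closed under the retained actions is an end-component, and that the refinement terminates having captured all maximal winning end-components. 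Once this is in place, combining with the reachability analysis via Remark~\ref{rem:key} and tallying the polynomial running time is routine.
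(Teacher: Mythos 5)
Your proposal is correct and follows essentially the same route as the paper: Lemma~\ref{lem:randomized-memoryless} plus the end-component/reachability decomposition of Remark~\ref{rem:key} for assertion~1, and the frequency-based linear program $\LP$ combined with the scc-refinement adapted from~\cite{KS88} for assertion~2. You also correctly isolate the connectedness issue of \figurename~\ref{fig:frequency-not-connected} as the one step requiring care, which is exactly where the paper invokes the arguments of~\cite[Theorems~2.2 and~3.3]{KS88}.
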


\noindent{\em Insufficiency of pure memoryless strategies.}
While we show that randomized memoryless strategies are sufficient, 
the example of \figurename~\ref{fig:frequency} shows that pure memoryless
strategies are not sufficient to achieve the optimal finite-memory value:
from $s_1$, a pure memoryless strategy can either choose $s_2$ and then the
mean-payoff value in the first dimension is $-\frac{3}{2} < 0$, or choose $\{s_3,s_4\}$
and then the mean-payoff value in the second dimension is $-\frac{1}{2} < 0$. 
Thus for all pure memoryless strategies, the generalized mean-payoff objective is
violated with probability~1 although there exists an almost-sure
winning \emph{randomized} memoryless strategy (see the paragraph {\em Illustration} 
after Lemma~\ref{lem:randomized-memoryless}).

\subsection{Generalized mean-payoff objectives under finite-memory in 2\half-player games}

We present a result analogous to Theorem~\ref{theo:half-memoryless} 
for generalized mean-payoff stochastic games showing that memoryless strategies are sufficient
for player~$\mi$ against finite-memory strategies. 
Note that the result extends Theorem~\ref{theo:half-memoryless}
as mean-payoff-sup objectives are not closed under shuffling 
(Remark~\ref{rmk:mean-payoff-inf-not-closed-under-shuffling}).

\begin{theorem}\label{theo:half-memoryless-mean-payoff}
In stochastic games with objective $\Omega \in \{\MeanSup, \MeanInf\}$, 
there exists an optimal-for-finite-memory strategy 
for player~$\ma$, there exists a memoryless optimal-for-finite-memory strategy
for player~$\mi$, and determinacy holds under finite-memory strategies, that is for all states $s$:
$$\vaf(\Omega,s) = \vbf(\Omega,s) =: v(s), \text{and}$$
$$ \sup_{\straa \in \Straa^{FM}} \inf_{\strab \in \Strab} \prob{s}^{\straa,\strab}(\Omega,s) = v(s) = 
\inf_{\strab \in \Strab^{PM}} \sup_{\straa \in \Straa^{FM}} \prob{s}^{\straa,\strab}(\Omega,s).
$$
\end{theorem}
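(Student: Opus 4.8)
The plan is to reduce Theorem~\ref{theo:half-memoryless-mean-payoff} to Theorem~\ref{theo:half-memoryless} by establishing that the conjunction of mean-payoff-sup objectives $\MeanSup$, although not itself closed under shuffling (Remark~\ref{rmk:mean-payoff-inf-not-closed-under-shuffling}), behaves like a shuffle-closed objective \emph{as far as finite-memory strategies of player~$\ma$ and memoryless strategies of player~$\mi$ are concerned}. The key observation, already noted in the excerpt for MDPs, is that in any finite-state Markov chain the probability of $\MeanSup$ equals the probability of $\MeanInf$ (since frequencies exist almost surely in the closed recurrent sets), and $\MeanInf$ \emph{is} prefix-independent and shuffle-closed. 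So the strategy is: once player~$\ma$ commits to a finite-memory strategy $\straa$ and player~$\mi$ to a memoryless strategy $\strab$, we obtain a finite Markov chain $\GG_{\straa,\strab}$, and there $\prob{s}^{\straa,\strab}(\MeanSup) = \prob{s}^{\straa,\strab}(\MeanInf)$. This lets us transfer all value computations between the two objectives whenever the relevant strategy profiles are finite-memory-vs-memoryless.

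First I would reexamine the proof of Lemma~\ref{lem:subgame-perfect}: it only uses prefix-independence of $\Omega$, so it directly yields a subgame-perfect-for-finite-memory strategy for player~$\ma$ for both $\Omega = \MeanSup$ and $\Omega = \MeanInf$, with no modification. Next I would run the induction of Theorem~\ref{theo:half-memoryless} on the number of player-$\mi$ states, but carry along \emph{both} objectives simultaneously. The base case $\abs{S_\mi}=0$ gives a Markov chain (after fixing $\straa$), where $\prob{s}^{\straa}(\MeanSup) = \prob{s}^{\straa}(\MeanInf)$, so $\vaf(\MeanSup,s) = \vaf(\MeanInf,s)$ and the memoryless (empty) strategy for player~$\mi$ is trivially optimal. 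In the induction step, the construction of the choice-fixed games $\GG_a,\GG_b$ goes through verbatim; by induction hypothesis player~$\mi$ has memoryless optimal-for-finite-memory strategies there, for both objectives, and these values agree for $\MeanSup$ and $\MeanInf$.

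The one place where shuffle-closure was genuinely used in Theorem~\ref{theo:half-memoryless} is the step proving inequality~\eqref{eq:value}: the strategy $\straa$ for player~$\ma$ in $\GG$ is built by switching between $\straa_a$ and $\straa_b$, producing plays that are shuffles of plays compatible with $\straa_a$ in $\GG_a$ and plays compatible with $\straa_b$ in $\GG_b$, and one concludes via $\Omega$ being closed under shuffling that the probability of $\Omega$ in $\GG$ is at least $\min\{\mystravaf{\strab^{\GG_a}}(\Omega,s),\mystravaf{\strab^{\GG_b}}(\Omega,s)\}$. Here I would argue as follows for $\Omega = \MeanSup$: fix the memoryless strategies $\strab$ for player~$\mi$ and the switching strategy $\straa$ for player~$\ma$; this is a finite Markov chain, so $\prob{s}^{\straa,\strab}(\MeanSup) = \prob{s}^{\straa,\strab}(\MeanInf)$; now apply the shuffle argument to the objective $\MeanInf$, which \emph{is} shuffle-closed, to get $\prob{s}^{\straa,\strab}(\MeanInf) \geq \min\{\mystravaf{\strab^{\GG_a}}(\MeanInf,s),\mystravaf{\strab^{\GG_b}}(\MeanInf,s)\}$, and finally use the induction hypothesis $\mystravaf{\strab^{\GG_x}}(\MeanInf,s) = \mystravaf{\strab^{\GG_x}}(\MeanSup,s)$ in the choice-fixed games. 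Chaining these equalities and the inequality yields~\eqref{eq:value} for $\MeanSup$. The rest of the proof of Theorem~\ref{theo:half-memoryless} (concluding that memoryless strategies suffice for player~$\mi$ in $\GG$, that the finite-memory values of the two players coincide, and that player~$\mi$'s strategy remains optimal even against infinite-memory player~$\ma$ strategies) then goes through unchanged, with $\MeanSup$ and $\MeanInf$ values equal at every stage.

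The main obstacle I anticipate is making the ``reduce to a Markov chain, then invoke shuffle-closure of $\MeanInf$'' maneuver fully rigorous at the point where the shuffle argument is applied: one must be careful that the decomposition of a play in $\GG$ into a shuffle of a $\GG_a$-play and a $\GG_b$-play is measurable and that the induced probability measures on the two component plays are exactly the ones under $(\straa_a,\strab^{\GG_a})$ and $(\straa_b,\strab^{\GG_b})$ — this is precisely the projection argument sketched in the proof of Theorem~\ref{theo:half-memoryless}, and it is objective-agnostic, so it applies to $\MeanInf$ just as to any shuffle-closed objective. A minor secondary point is to confirm that all the intermediate strategies appearing (the switching strategy $\straa$, and player~$\mi$'s strategies pulled back from $\GG_a,\GG_b$) are finite-memory resp.\ memoryless, so that every Markov chain we form along the way is finite-state and the $\MeanSup$-$\MeanInf$ coincidence is legitimately available; both facts are already established in the proof of Theorem~\ref{theo:half-memoryless}.
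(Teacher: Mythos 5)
Your route is workable but genuinely different from the paper's, and as written it has one step that does not prove what is needed. The problem is your treatment of inequality~\eqref{eq:value}: you ``fix the memoryless strategies $\strab$ for player~$\mi$'' so as to obtain a finite Markov chain in which $\prob{s}^{\straa,\strab}(\MeanSup)=\prob{s}^{\straa,\strab}(\MeanInf)$. But \eqref{eq:value} must be established for \emph{all} strategies $\strab$ of player~$\mi$ in $\GG$ (at the very least all finite-memory ones): its role is to show that no counter-strategy of player~$\mi$ can push the value of the switching strategy below $\min\{\mystravaf{\strab^{\GG_a}}(\Omega,s),\mystravaf{\strab^{\GG_b}}(\Omega,s)\}$, and it is also what yields $\sup_{\straa\in\Straa^{FM}}\inf_{\strab\in\Strab}\prob{s}^{\straa,\strab}(\Omega)\geq v(s)$. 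Proving it only against memoryless $\strab$ gives determinacy against memoryless adversaries, from which neither $\vaf=\vbf$ nor the optimality of the extended memoryless strategy for player~$\mi$ follows. The repair is easy and makes the Markov-chain detour unnecessary at exactly this point: for an \emph{arbitrary} $\strab$ use the containment $\MeanInf\subseteq\MeanSup$ to get $\prob{s}^{\straa,\strab}(\MeanSup)\geq\prob{s}^{\straa,\strab}(\MeanInf)$, apply the shuffle argument to $\MeanInf$ (which, as the paper notes, already handles arbitrary, even infinite-memory, $\strab$), and reserve the Markov-chain equality for the finite-memory-vs-memoryless profiles in the choice-fixed games, where it legitimately converts the $\MeanInf$-values of $\GG_a,\GG_b$ into their $\MeanSup$-values via your strengthened induction hypothesis.

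For comparison, the paper does not re-run the induction at all. It observes that once a finite-memory strategy of player~$\ma$ is fixed, the induced MDP $\GG_\straa$ presents player~$\mi$ with a prefix-independent objective for which positional --- hence finite-memory --- best responses exist by the end-component analysis of Remark~\ref{rem:key}; the inner infimum is therefore attained at a finite-memory-vs-finite-memory profile, where $\MeanSup$ and $\MeanInf$ have equal probability, so every quantity in the theorem for $\MeanSup$ collapses to the corresponding quantity for $\MeanInf$, which is already covered by Theorem~\ref{theo:half-memoryless}. That reduction is a few lines; your approach, once repaired as above, proves the same statement but at the cost of re-verifying the entire induction with the two objectives carried along in parallel.
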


\begin{proof}
For mean-payoff-inf objectives ($\Omega = \MeanInf\}$) the result follows
from Theorem~\ref{theo:half-memoryless}.
We consider mean-payoff-sup objectives ($\Omega = \MeanSup\}$).
Once a finite-memory strategy for player~$\ma$ is fixed we have an MDP
for player~$\mi$, with a disjunction of mean-payoff objectives.
We now analyze the MDP problem.
Since the objective is prefix-independent, by Remark~\ref{rem:key}, 
every end-component has value either~1 or~0. 
It follows that in every end-component with value~1, one of the mean-payoff 
objectives is satisfied with value~1, for which positional strategies are 
sufficient~\cite{FV97,Puterman}.
The optimal reachability to winning end-components is also achieved by
positional strategies. 
Given the existence of positional strategies for player~$\mi$ in MDPs, we
consider the game problem.
It follows that once a finite-memory strategy for player~$\ma$ is fixed, the 
counter-strategy for player~$\mi$ is also finite-memory, and for finite-memory
strategies mean-payoff-sup and mean-payoff-inf objectives coincide.
The desired result follows.
\end{proof}


It follows that the value problem for generalized mean-payoff games
with finite-memory strategies can be solved in coNP by guessing a memoryless
strategy for player~$\mi$ and checking whether the value of the resulting
MDP under finite-memory strategies for player~$\ma$ is above the given threshold, 
which can be done in polynomial time (Theorem~\ref{theo:mdp}).
By the result of~\cite[Lemma~5, Lemma~6]{VCDHRR15}, the problem of deciding
the existence of a finite-memory almost-sure winning strategy for player~$\ma$ in a game 
(even with only deterministic transitions) with a conjunction of mean-payoff-sup 
or mean-payoff-inf objectives is coNP-hard. 
Theorem~\ref{theo:finite-coNP-complete} summarizes the results of this section.

\begin{theorem}\label{theo:finite-coNP-complete}
The value and value-strategy problems for stochastic games
with generalized mean-payoff-(inf or sup) objectives 
played with finite-memory strategies for player~$\ma$ 
(and finite- or infinite-memory strategies for player~$\mi$) 
are coNP-complete.
\end{theorem}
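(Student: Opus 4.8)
The plan is to prove the two directions separately, reusing the structural results established earlier in this section so that almost no new work is needed.

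\textbf{coNP membership.} I would show that the complement of the value problem is in NP. By Theorem~\ref{theo:half-memoryless-mean-payoff}, against finite-memory strategies of player~$\ma$ player~$\mi$ has a \emph{pure memoryless} optimal-for-finite-memory strategy, and in fact $\vaf(\Omega,s) = \inf_{\strab \in \Strab^{PM}} \sup_{\straa \in \Straa^{FM}} \prob{s}^{\straa,\strab}(\Omega)$. Hence $\vaf(\Omega,s) < \lambda$ holds if and only if there is a pure memoryless $\strab$ for player~$\mi$ with $\sup_{\straa \in \Straa^{FM}} \prob{s}^{\straa,\strab}(\Omega) < \lambda$. A nondeterministic polynomial-time procedure guesses such a $\strab$; since $\strab$ is memoryless, fixing it in $\GG$ yields an MDP $\GG_\strab$ of size polynomial in $\GG$, in which player~$\ma$ plays with finite memory and the objective is again of the form $\MeanInf$ (resp.\ $\MeanSup$), because deleting player-$\mi$ states and redirecting their incoming transitions does not touch the reward function on the surviving states; the $\MeanSup$ case is handled exactly as in the proof of Theorem~\ref{theo:half-memoryless-mean-payoff}. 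By Theorem~\ref{theo:mdp}, the finite-memory value $\vaf(\Omega,\cdot)$ of $\GG_\strab$ is computable in polynomial time, and the procedure accepts iff this value is $<\lambda$. Thus the complement of the value problem is in NP, i.e.\ the value problem is in coNP. For the value-strategy problem, I would note that Theorem~\ref{theo:half-memoryless-mean-payoff} guarantees an optimal-for-finite-memory strategy $\straa$ for player~$\ma$, so $\straavaf(\Omega,s) = \vaf(\Omega,s)$ for a suitable $\straa$; therefore ``there exists a finite-memory $\straa$ with $\straavaf(\Omega,s) \geq \lambda$'' is equivalent to $\vaf(\Omega,s) \geq \lambda$, and the two problems coincide and are both in coNP.

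\textbf{coNP-hardness.} Here I would simply invoke hardness for the strictly simpler subclass of $2$-player graph games: by \cite[Lemma~5, Lemma~6]{VCDHRR15}, deciding whether player~$\ma$ has a finite-memory almost-sure winning strategy in a $2$-player game (with only deterministic transitions) for a conjunction of mean-payoff-inf, resp.\ mean-payoff-sup, objectives is coNP-hard. A $2$-player game is the special case of a stochastic game with no probabilistic states, and a finite-memory almost-sure winning strategy exists iff $\vaf(\Omega,s) = 1$ (the sup is attained, again by Theorem~\ref{theo:half-memoryless-mean-payoff}), i.e.\ the case $\lambda = 1$ of the value (and value-strategy) problem. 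Hence coNP-hardness transfers verbatim, and together with the membership argument this gives coNP-completeness. In particular the same argument shows the almost-sure problem is coNP-complete.

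I do not expect a genuine obstacle in this theorem: it is essentially an assembly of Theorem~\ref{theo:half-memoryless-mean-payoff} (pure memoryless strategies suffice for the adversary, with finite-memory determinacy) and Theorem~\ref{theo:mdp} (polynomial-time MDP solution). The only points needing a sentence of care are (i) that fixing the guessed memoryless $\strab$ produces a polynomial-size MDP whose objective is exactly one of the two forms covered by Theorem~\ref{theo:mdp}, and (ii) that the sup in the definition of $\vaf$ is attained on player~$\ma$'s side, so that the threshold problem with $\lambda = 1$ really captures the existence of a finite-memory almost-sure winning strategy used in the hardness reduction; both follow directly from the earlier theorems.
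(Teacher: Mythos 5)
Your proposal is correct and follows essentially the same route as the paper: coNP membership by guessing a pure memoryless strategy for player~$\mi$ (justified by Theorem~\ref{theo:half-memoryless-mean-payoff}) and solving the resulting MDP in polynomial time via Theorem~\ref{theo:mdp}, with hardness imported from \cite[Lemma~5, Lemma~6]{VCDHRR15} for $2$-player games. The extra care you take about the value-strategy problem coinciding with the value problem and the sup being attained is consistent with, and slightly more explicit than, the paper's presentation.
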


\subsection{Memory bounds for strategies in 2\half-player games}\label{sec:membou}
We present both exponential lower bound and upper bound on memory of strategies.

\smallskip\noindent{\bf Lower bound.}
We show that in games where finite memory is sufficient to win almost-surely 
a conjunction of mean-payoff objectives, exponential memory is necessary in general,
even with randomized strategies. There exists a 
family of generalized mean-payoff games in which player~$\ma$ has a finite-memory almost-sure winning 
strategy, and every almost-sure winning strategy needs exponential memory.
The family of games is illustrated in \figurename~\ref{fig:exponential},
and is essentially the same family used in the proof of~\cite[Lemma 8]{CRR14}
(for 2-player games with pure finite-memory strategies). 

\smallskip\noindent{\em Lower bound family.}
For $k=1,2,\dots$, the $k$-th game in the family consists of $2k$ gadgets where
the first $k$ gadgets belong to player~$\mi$, and the last $k$ gadgets belong to 
player~$\ma$, for a total of $6k + 2$ states (\figurename~\ref{fig:exponential}). 
The action set is $\{L,R\}$ and the unique $L$-successor of $s_i$ is $s_i^L$,
the unique $L$-successor of $t_i$ is $t_i^L$, and similarly for $R$-successors.
Thus all transitions in the games are deterministic.
The reward function has dimension $2k$. All states have reward $0$ in all dimensions 
except the following states and dimensions:
\begin{compactitem}
\item[$\bullet$] $\rwd_{2i-1}(s_i^L) = \rwd_{2i}(s_i^R) = -1$
\item[$\bullet$] $\rwd_{2i-1}(t_i^L) = \rwd_{2i}(t_i^R) = 1$
\end{compactitem}
An almost-sure winning strategy for player~$\ma$ in this game is to copy in 
every state $t_i$ the choice of player~$\mi$ in state $s_i$, namely choosing 
$t_i^L$ if player~$\mi$ has chosen $s_i^L$, and choosing $t_i^R$ if player~$\mi$ 
has chosen $s_i^R$. This strategy ensures that the sum of rewards in all 
dimensions is bounded (in the interval $[-1,1]$) and thus the mean-payoff value
is $0$ in all dimensions. The memory needed to describe this strategy requires
$k$ bits to remember the last $k$ choices of player~$\mi$, thus a memory set of 
size $2^{k}$.

\begin{lemma}\label{lem:exponential}
There exists a family of games $G_k$ with $O(k)$ states and generalized mean-payoff
objective of dimension $2k$ such that a finite-memory almost-sure winning strategy exists,
and all almost-sure winning randomized strategies require memory of size at least $2^k$.
\end{lemma}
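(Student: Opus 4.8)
The plan is to establish the two directions of the statement separately. For the existence of a finite-memory almost-sure winning strategy, I would verify the strategy already described before the statement: at each $t_i$ player~$\ma$ replays the action that player~$\mi$ chose at the last visit to $s_i$. Along any compatible play the partial sum of $\rwd_{2i-1}$ (resp.\ $\rwd_{2i}$) stays in $[-1,1]$, because every $-1$ contributed by $s_i^L$ (resp.\ $s_i^R$) is matched by a $+1$ contributed by $t_i^L$ (resp.\ $t_i^R$) in the same traversal of the cycle; hence the $\liminf$-average is $0$ in every dimension and $\MeanInf$ holds along every compatible play, so this strategy is almost-sure winning. It is implemented by a transducer whose memory records the last choice made at each of $s_1,\dots,s_k$, i.e.\ a memory set of size $2^k$.

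For the lower bound I would adapt the argument of~\cite[Lemma~8]{CRR14}, which handles two-player games with \emph{pure} finite-memory strategies, to randomized strategies. Fix an almost-sure winning strategy $\straa$ for player~$\ma$ given by a transducer with memory set $M$, and suppose for contradiction that $\abs{M}<2^k$. For each $b=(b_1,\dots,b_k)\in\{L,R\}^k$ let $\strab_b$ be the pure memoryless strategy of player~$\mi$ that plays $b_i$ at $s_i$; since $\straa$ is almost-sure winning, $\prob{s_1}^{\straa,\strab_b}(\MeanInf)=1$. Playing $\straa$ and $\strab_b$ in the game yields a finite Markov chain on $S\times M$, so by Lemma~\ref{lem:end-component} the set $\Inf(\rho)$ of states visited infinitely often is almost surely a closed recurrent set; let $R_b$ be one that is reached with positive probability (it is strongly connected, so each of its states is visited with strictly positive long-run frequency). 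The key claim is that \emph{inside $R_b$ player~$\ma$ copies $b$ deterministically}, i.e.\ $\straa_n(m)$ is the Dirac distribution on $b_i$ for every configuration $\tuple{t_i,m}\in R_b$. Indeed, the only states with nonzero reward in dimension $2i-1$ (resp.\ $2i$) are $s_i^L$ with value $-1$ and $t_i^L$ with value $+1$ (resp.\ $s_i^R,t_i^R$), so along any play consistent with $\strab_b$ the increment of the relevant dimension ($2i-1$ if $b_i=L$, else $2i$) over one traversal of the cycle equals $-1$ when player~$\ma$ fails to copy at $t_i$ and $0$ otherwise. If $\straa_n(m)$ put positive probability on the non-copying action at some $\tuple{t_i,m}\in R_b$, that action would be taken with strictly positive long-run frequency, and by the ergodic theorem for finite Markov chains the running averages of the relevant reward coordinate would converge almost surely to a strictly negative constant, contradicting $\prob{s_1}^{\straa,\strab_b}(\MeanInf)=1$.

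Given the claim, the trajectory of $R_b$ is a deterministic lasso along which player~$\ma$'s moves at $t_1,\dots,t_k$ spell out $b$. Let $\mu_i^b\in M$ be the memory value occurring at $t_i$ along $R_b$; then $\straa_n(\mu_i^b)$ is Dirac on $b_i$, and since the memory update $\straa_u$ and all gadget transitions are deterministic, $\mu_{i+1}^b$ is a fixed function of $\mu_i^b$ and $b_i$. An easy induction then shows that $\mu_1^b=\mu_1^{b'}$ implies $\mu_i^b=\mu_i^{b'}$ and $b_i=b'_i$ for all $i$, i.e.\ $b=b'$. Hence $b\mapsto\mu_1^b$ is an injection of $\{L,R\}^k$ into $M$, so $\abs{M}\ge 2^k$, contradicting the assumption; infinite-memory strategies are covered trivially. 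This proves the lemma.

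The step I expect to be the main obstacle is the "copying is forced" claim in the randomized setting, i.e.\ converting "positive probability of a non-copying move at a recurrently visited configuration" into "some mean-payoff coordinate is strictly negative almost surely". This is exactly where the argument goes beyond the pure-strategy proof of~\cite[Lemma~8]{CRR14}: it requires the ergodic theorem for (possibly periodic) finite Markov chains together with the bookkeeping that, in dimension $2i-1$ (resp.\ $2i$), no state other than $s_i^L,t_i^L$ (resp.\ $s_i^R,t_i^R$) carries any reward, so that repeated failures to copy accumulate an unbounded deficit rather than being washed out.
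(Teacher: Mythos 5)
Your proposal is correct, and for the lower bound it takes a genuinely different route from the paper. The paper argues by contradiction via a pigeonhole on memory states: if $\abs{M}<2^k$, then after any return to the start of the $s$-gadget block there are two distinct choice sequences $\rho_1\neq\rho_2$ of player~$\mi$ that drive the transducer to the \emph{same} memory state; the position where they differ is a ``star'', some star position has positive frequency in the resulting Markov chain, and a frequency count ($f_1+f_2+f_*=1$ versus $g_1+g_2=1$) shows player~$\mi$ can resolve all starred choices uniformly to $L$ or to $R$ so that some dimension becomes strictly negative --- crucially without altering player~$\ma$'s (possibly randomized) behavior, since the memory state after the block is unchanged. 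You instead fix, for each $b\in\{L,R\}^k$, the pure memoryless adversary $\strab_b$, prove by an ergodic frequency-deficit argument that player~$\ma$ must deterministically echo $b$ at every configuration $\tuple{t_i,m}$ of a recurrent class, and then read off an injection of $\{L,R\}^k$ into $M$. Both arguments are sound; yours has the pleasant features that the spoiling adversaries are memoryless (consistent with Theorem~\ref{theo:half-memoryless-mean-payoff}) and that the bound comes out as a direct counting statement rather than a contradiction, while the paper's version isolates the single ``confused'' position per traversal and exhibits an explicit finite-memory spoiler. Two small points to tighten in a write-up: the recurrent class, being a deterministic cycle in $S\times M$ once the echo claim is established, may visit $t_1$ under several memory values, so $\mu_1^b$ should be fixed as an arbitrary such value (the induction $\mu_1^b=\mu_1^{b'}\Rightarrow b=b'$ is unaffected); and you should note explicitly that every recurrent class must contain some $\tuple{t_1,\cdot}$ because the underlying game graph is a single cycle. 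Your identification of the ``copying is forced'' step as the place where randomization must be handled is exactly right, and your resolution of it is valid.
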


\begin{proof}
We show that no randomized strategy with memory of size less than $2^{k}$ 
is almost-sure winning in the family of games $G_k$ presented above (see \figurename~\ref{fig:exponential}). 
The proof is by contradiction. Consider a strategy $\straa$
with memory of size less than $2^{k}$, and show that $\straa$ is not 
almost-sure winning. Since $\straa$ has memory of size less than $2^{k}$,
after every visit to $s_1$, it is possible for player~$\mi$ to make two
different sequences of choices until reaching $t_1$, such that from $t_1$
the strategy $\straa$ of player~$\ma$ behaves identically because it is in the same memory state. 
Formally, for all play prefixes $\rho \in S(S^{4k+2})^*$ with $\Last(\rho) = s_0$, 
there exist two play prefixes $\rho_1, \rho_2 \in S^{2k+1}$ with $\rho_1 \neq \rho_2$
such that $\hat{\straa}_u(m_0, \rho \rho_1) = \hat{\straa}_u(m_0, \rho \rho_2)$,
where $m_0$ is the initial memory and $\hat{\straa}_u$ is the update function
of the strategy $\straa$.
We can view $\rho_1$ and $\rho_2$ as sequences of $k$ bits taking value $L$
or $R$. Since $\rho_1 \neq \rho_2$, there is an index $1 \leq i \leq k$ where
the $i$-th bit differs in $\rho_1$ and $\rho_2$; we say that there is a star at
position $i$ after prefix $\rho$. Intuitively, the star represents the possibility
for player~$\mi$ to pick either $L$ or $R$ at position $i$ (by playing according to
$\rho_1$ or $\rho_2$) without affecting the future choices of player~$\ma$.
Note that the choice of $L$ or $R$ at position $i$ changes the reward in 
dimensions $2i-1$ and $2i$ (respective rewards $0$ and $-1$ or $-1$ and $0$) and
possibly in other dimensions.
Consider the (finite-memory) strategy for player~$\mi$ that for each such 
prefix $\rho$ plays as prescribed by $\rho_1$, 
and consider the Markov chain obtained by fixing the strategies $\straa$
and $\strab$ in the game $G_k$. Consider a closed recurrent set in this Markov chain,
and compute for each index $i=1,\dots,k$ the frequency of occurrence of a star
(the frequencies are well-defined because the Markov chain has finitely many states).
It is easy to see that for some index $i$ this frequency is greater than $0$ (in fact,
at least $\frac{1}{k}$). Let $f_*$ be the frequency of occurrence of a star at 
position $i$,
relative to the frequency of occurrence of state $s_1$.
Given the index $i$, we consider the dimensions $2i -1$ and $2i$. 
In the closed recurrent set, we can analogously compute the relative frequency of the left and right
choices at $s_i$ and $t_i$ (without counting the choices when there is a star at position $i$), 
thus computing the frequency $f_1$ of rewards $(-1,0)$
in dimensions $2i-1$ and $2i$, the frequency $f_2$ of rewards $(0,-1)$, 
the frequency $g_1$ of $(1,0)$, and $g_2$ of $(0,1)$. 
Thus we have $f_1 + f_2 + f_* = 1$ and $g_1 + g_2 = 1$. Since $f_* > 0$,
it follows that either $f_1 + f_* > g_1$ or $f_2 + f_* > g_2$, that is
either in dimension $2i -1$ or $2i$, the expected reward can be made negative 
by choosing either always the choice $L$ or always the choice $R$ at all states that have 
a star at position $i$ (remember that this modifies the strategy $\strab$ of player~$\mi$, 
but does not affect the choices made by the strategy $\straa$ of player~$\ma$ along
the play). It follows that almost-surely the mean-payoff value 
is negative in that dimension, showing that $\straa$ is not almost-surely winning.

It follows that exponential memory is necessary in general to win almost-surely 
a generalized mean-payoff game (even when finite memory is sufficient).
\end{proof}

\smallskip\noindent{\bf Upper bound.} 
Theorem~\ref{theo:membou} and Theorem~\ref{theo:mdp} establish an $\abs{A}^{\abs{S_{\mi}}}$ 
upper bound on memory required for optimal-for-finite-memory strategies. 
Thus we obtain the following result.

\begin{theorem}\label{theo:memgenmean}
The optimal bound for memory required for optimal-for-finite-memory strategies for 
player~$\ma$ in generalized mean-payoff stochastic games is exponential.
\end{theorem}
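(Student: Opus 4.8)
The plan is to prove the statement by combining a matching exponential lower bound and exponential upper bound, both of which are essentially already available from the preceding results. For the lower bound I would simply invoke Lemma~\ref{lem:exponential}, which exhibits a family of games $G_k$ with $O(k)$ states and a generalized mean-payoff objective of dimension $2k$ in which a finite-memory almost-sure winning strategy exists, yet every almost-sure winning (even randomized) strategy needs memory of size at least $2^k$. Since in $G_k$ the finite-memory value equals~$1$, an optimal-for-finite-memory strategy for player~$\ma$ is exactly an almost-sure winning finite-memory strategy, so such strategies require memory $2^k$, which is exponential in the $O(k)$ size of the game; nothing further is needed for this direction.

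For the upper bound the plan is to reduce to the shuffle-closed case and then apply the generic memory bound of Theorem~\ref{theo:membou}. First take $\Omega = \MeanInf$: by Remark~\ref{rmk:mean-payoff-inf-not-closed-under-shuffling} it is prefix-independent and closed under shuffling, so Theorem~\ref{theo:membou} gives an upper bound of $\abs{A}^{\abs{S_{\mi}}} \cdot M(0)$ on the memory of optimal-for-finite-memory strategies for player~$\ma$, where $M(0)$ bounds the memory needed for $\MeanInf$ in MDPs. By Theorem~\ref{theo:mdp}(1), randomized memoryless strategies are optimal-for-finite-memory in generalized mean-payoff MDPs, so $M(0) = 1$ and the bound becomes $\abs{A}^{\abs{S_{\mi}}}$, i.e.\ exponential in the size of the game. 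For $\Omega = \MeanSup$ I would observe that once player~$\ma$ and player~$\mi$ both fix finite-memory strategies the play is governed by a finite Markov chain, in which the $\limsup$- and $\liminf$-averages agree almost surely~\cite{FV97}; hence $\prob{s}^{\straa,\strab}(\MeanSup) = \prob{s}^{\straa,\strab}(\MeanInf)$ for all finite-memory $\straa,\strab$, so $\vaf(\MeanSup,s) = \vaf(\MeanInf,s)$ and every strategy optimal-for-finite-memory for $\MeanInf$ is also optimal-for-finite-memory for $\MeanSup$. Thus the same $\abs{A}^{\abs{S_{\mi}}}$ bound carries over to $\MeanSup$, and combining the two directions shows the optimal memory bound is exactly exponential.

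I expect the write-up to be mostly bookkeeping, since all substantive ingredients are already in place: the subgame-perfect construction of Lemma~\ref{lem:subgame-perfect}, the half-memoryless induction behind Theorem~\ref{theo:membou} (recurrence $M(k) = \abs{A}\cdot M(k-1)$), the frequency-based linear program of Theorem~\ref{theo:mdp}, and the ``star argument'' of Lemma~\ref{lem:exponential}. The one place I would phrase things carefully is the $\MeanSup$ case: Theorem~\ref{theo:membou} needs shuffle-closure, which $\MeanSup$ lacks, so the passage to $\MeanSup$ must route through the Markov-chain equivalence of $\liminf$ and $\limsup$ averages rather than a direct invocation. A secondary, minor check is confirming that $2^k$ is genuinely exponential in the description size of the lower-bound family — which it is, as $G_k$ has $O(k)$ states, a two-element action set, and rewards in $\{-1,0,1\}$ across $2k$ dimensions.
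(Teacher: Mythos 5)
Your proposal is correct and follows essentially the same route as the paper: the lower bound is Lemma~\ref{lem:exponential}, and the upper bound is Theorem~\ref{theo:membou} combined with Theorem~\ref{theo:mdp} giving $M(0)=1$ and hence $\abs{A}^{\abs{S_{\mi}}}$. Your explicit handling of the $\MeanSup$ case via the coincidence of $\liminf$- and $\limsup$-averages under finite-memory strategies is a point the paper leaves implicit (it is the same argument used in its Theorem~\ref{theo:half-memoryless-mean-payoff}), so this is a welcome clarification rather than a different approach.
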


\section{Generalized Mean-Payoff Objectives under Infinite-Memory Strategies}\label{sec:infmem}

In this section, we consider games with a conjunction of mean-payoff objectives
and infinite-memory strategies for player~$\ma$ (which are more powerful than 
finite-memory strategies~\cite[Lemma~7]{VCDHRR15}).


\subsection{$\MeanInf$ objectives}

Since $\MeanInf$ objectives are prefix-independent and closed under shuffling, 
it follows from the results of~\cite[Theorem~5.2]{GK14}
that for player~$\mi$ memoryless optimal strategies exist. Therefore the
value and value-strategy problems can be solved in coNP by guessing a (optimal)
memoryless strategy for player~$\mi$, and then solving an MDP with conjunction
of mean-payoff objectives under infinite-memory strategies, which can be done in 
polynomial time by the result of~\cite[Section~3.2]{BBCFK14}.
A matching coNP-hardness bound is known for 2-player games~\cite[Theorem~7]{VCDHRR15}.

\begin{theorem}\label{theo:mean-payoff-inf-infinite}
The value and the value-strategy problems for stochastic games 
with generalized mean-payoff-inf objectives under infinite-memory 
strategies 
are coNP-complete.
\end{theorem}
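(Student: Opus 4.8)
The plan is to establish the coNP upper bound and the matching coNP lower bound separately; as is typical for such games, membership comes from exhibiting a short certificate for the \emph{losing} side (here player~$\mi$), while hardness is inherited from the two-player special case.

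\smallskip\noindent\emph{coNP membership.} First I would observe that player~$\mi$ has a pure memoryless optimal strategy: $\MeanInf$ objectives are prefix-independent and, by Remark~\ref{rmk:mean-payoff-inf-not-closed-under-shuffling}, closed under shuffling, so this is precisely the conclusion of~\cite[Theorem~5.2]{GK14} (here player~$\ma$ ranges over all, possibly infinite-memory, strategies, so the finite-memory refinements of Section~\ref{sec:half} are not needed). Next I would give a nondeterministic procedure for the complement of the value problem: guess a pure memoryless strategy $\strab$ for player~$\mi$, whose description is polynomial in the size of $\GG$; fix it to obtain the MDP $\GG_{\strab}$ for player~$\ma$ carrying the same conjunction of mean-payoff-inf objectives; and compute the optimal value of $\GG_{\strab}$, which by~\cite[Section~3.2]{BBCFK14} can be done in polynomial time. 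By Martin's determinacy together with the memoryless optimality of player~$\mi$, one has $\va(\MeanInf,s) < \lambda$ if and only if some guessed $\strab$ makes the value of $\GG_{\strab}$ at $s$ strictly below $\lambda$; hence the complement of the value problem lies in NP and the value problem lies in coNP. For the value-strategy problem it suffices to note that optimal strategies for player~$\ma$ exist for $\MeanInf$ objectives: fixing an optimal $\strab$ for player~$\mi$, in the resulting MDP an optimal strategy reaches the winning end-components with optimal probability and then plays an end-component-winning strategy (Remark~\ref{rem:key}); thus the value-strategy and value problems coincide and the same certificate applies.

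\smallskip\noindent\emph{coNP-hardness.} Every 2-player graph game is a stochastic game with no probabilistic states, and this embedding preserves the reward function, the $\MeanInf$ objective, and the threshold $\lambda$ verbatim. Consequently the coNP-hardness of the value (and value-strategy) problem for 2-player graph games with generalized mean-payoff-inf objectives under infinite-memory strategies, established in~\cite[Theorem~7]{VCDHRR15}, transfers directly.

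\smallskip\noindent\emph{Expected main obstacle.} The argument is modular, so the only real care needed is in checking that the two imported results mesh: that~\cite[Theorem~5.2]{GK14} genuinely applies to $\MeanInf$ in the full stochastic (not merely deterministic) model --- which it does, $\MeanInf$ being prefix-independent and shuffle-closed --- and that the MDP algorithm of~\cite[Section~3.2]{BBCFK14} indeed solves the satisfaction-semantics value problem for a conjunction of mean-payoff-inf objectives under infinite-memory strategies, including the reachability analysis to the winning end-components of Remark~\ref{rem:key}, so that the guessed memoryless strategy for player~$\mi$ can be verified in polynomial time.
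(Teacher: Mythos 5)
Your proposal is correct and follows essentially the same route as the paper: pure memoryless optimality for player~$\mi$ from~\cite[Theorem~5.2]{GK14} (using prefix-independence and shuffle-closure of $\MeanInf$), a guessed memoryless strategy verified in polynomial time via the MDP algorithm of~\cite[Section~3.2]{BBCFK14}, and coNP-hardness inherited from the 2-player case~\cite[Theorem~7]{VCDHRR15}. Your additional remarks on determinacy and on the existence of optimal strategies for player~$\ma$ (so that the value and value-strategy problems coincide) only make explicit what the paper leaves implicit.
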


\subsection{$\MeanSup$ objectives}

We focus on the \emph{almost-sure winning problem} for generalized mean-payoff objectives, 
which is to decide whether there exists an almost-sure winning strategy 
for player~$\ma$ from a given state.
We show that the almost-sure winning problem is in NP~$\cap$~coNP for a conjunction
of mean-payoff-sup objectives.

\begin{remark}\label{rem:red-almost-sure}
As mentioned in~\cite[Remark~1]{CDGO14}, it follows from the results of~\cite[Lemma~7]{CHH09} 
and~\cite[Theorem~4.1]{GH10} that since mean-payoff objectives are prefix-independent objectives,
the memory requirement for optimal strategies of both players is the same as 
for almost-sure winning strategies, and if the almost-sure winning problem is in 
NP~$\cap$~coNP, then the value-strategy problem is also in 
NP~$\cap$~coNP. Thus it will follow from our results
for the almost-sure problem 
that the value and value-strategy problems are 
in NP~$\cap$~coNP for $\MeanSup$ objectives.
\end{remark}

For mean-payoff-sup objectives, we show that the almost-sure winning problem
is in NP~$\cap$~coNP. For player~$\ma$ to be almost-sure winning for a conjunction
of mean-payoff-sup objectives, it is necessary to be almost-sure winning for
each one-dimensional mean-payoff-sup objective, and we show that it is sufficient.
An almost-sure winning strategy is to play in rounds according to the almost-sure winning
strategy of each one-dimensional objective successively, for a duration that
is always finite but long enough to ensure the corresponding one-dimensional 
average of rewards (thus over finite plays) tends to the objective mean-payoff value
with high probability (that tends to $1$ as the number of rounds increases).

\begin{lemma}\label{lem:all-one}
If in a game, for every one-dimensional mean-payoff-sup objective $\MeanSup_j$ ($j=1,\dots,k$)
all states are almost-sure winning for player~$\ma$,
then for the objective $\MeanSup = \bigwedge_{1 \leq j  \leq k} \MeanSup_j$ 
all states are almost-sure winning for player~$\ma$.
\end{lemma}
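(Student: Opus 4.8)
The plan is to build, for player~$\ma$, an infinite-memory strategy $\straa$ that proceeds in consecutive \emph{rounds}. Round~$r$ ($r=1,2,\dots$) has \emph{active dimension} $j_r := ((r-1) \bmod k)+1$ and occupies a deterministic block of $\ell_r$ steps, so the round boundaries are a function of the history length only. Since every state is almost-sure winning for each $\MeanSup_j$ and $S$ is finite, we may fix, for each dimension $j$ and each state $s$, a strategy $\straa_j^s$ that is almost-sure winning for $\MeanSup_j$ from~$s$. At the start of round~$r$, in the current state $s$, player~$\ma$ restarts and simulates $\straa_{j_r}^{s}$ for the next $\ell_r$ steps; the memory of $\straa$ stores the current round, the position inside it, and the memory of the simulated one-dimensional strategy. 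It remains to choose the lengths $\ell_r$; alongside them we use parameters $\epsilon_r := 1/r$ and $\delta_r := 2^{-r}$.

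First I would fix the $\ell_r$ inductively. Let $W = \max_{s,j}\lvert \rwd_j(s)\rvert$ and $T_{r-1} = \ell_1+\dots+\ell_{r-1}$ (the length of the history before round~$r$, with $T_0 = 0$), and set $N_r = \lceil 4WT_{r-1}/\epsilon_r\rceil$, so that $WT_{r-1}/m \le \epsilon_r/4$ for all $m \ge N_r$. Because $\straa_{j_r}^{s}$ is almost-sure winning for $\MeanSup_{j_r}$ from~$s$, the event ``$\limsup \ge 0$'' has probability~$1$ against every opponent, hence the running average of $\rwd_{j_r}$ exceeds $-\epsilon_r/2$ at infinitely many times almost surely, so $\prob{s}^{\straa_{j_r}^{s},\strab}(\exists\, m \ge N_r : \tfrac1m\sum_{i<m}\rwd_{j_r}(s_i) \ge -\epsilon_r/2) = 1$ for every $\strab$. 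Since this event depends only on the first $\ell_r$ steps once we cap $m \le \ell_r$, a K\H{o}nig's-lemma / compactness argument (extracting a single opponent strategy from a hypothetical sequence of bad ones, and contradicting the probability-$1$ statement) yields a finite $\ell_r$ --- which we may take as large as we like, and in particular $\ell_r \ge \max\{N_r, r\}$ --- such that, uniformly over the start state $s$ and over all opponent strategies $\strab$,
$$\prob{s}^{\straa_{j_r}^{s},\strab}\Big(\exists\, m \in [N_r, \ell_r] : \tfrac1m\textstyle\sum_{i<m}\rwd_{j_r}(s_i) \ge -\epsilon_r/2\Big) \ \ge\ 1-\delta_r ,$$
where uniformity over~$s$ is just a maximum over finitely many states. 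This determines $\ell_r$ and hence $T_r$.

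Next I would analyze a play $\rho = s_0 s_1 \cdots$ compatible with $\straa$ and an arbitrary $\strab$. Let $G_r$ be the event that at some time $n = T_{r-1}+m$ with $m \in [N_r,\ell_r]$ the \emph{whole-history} average $\tfrac1n\sum_{i<n}\rwd_{j_r}(s_i)$ is $\ge -\epsilon_r$. Writing $\sum_{i<n}\rwd_{j_r}(s_i) = P + Q$ with $P$ the contribution of the steps before round~$r$ (so $\lvert P\rvert \le WT_{r-1}$) and $Q$ the contribution of round~$r$ up to position~$m$: on the within-round event above, $Q/m \ge -\epsilon_r/2$, hence $Q/(T_{r-1}+m) \ge -\epsilon_r/2$, while $\lvert P\rvert/(T_{r-1}+m) \le WT_{r-1}/m \le \epsilon_r/4$; adding gives the whole-history average $\ge -\epsilon_r$. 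Applying the bound of the previous paragraph conditionally at the start of round~$r$ --- legitimate because, conditioned on the history up to step $T_{r-1}$, round~$r$ is exactly a fresh run of $\straa_{j_r}^{s_{T_{r-1}}}$ against the opponent's residual strategy --- gives $\prob{s_0}^{\straa,\strab}(G_r) \ge 1-\delta_r$ for all~$r$. As $\sum_r\delta_r < \infty$, Borel--Cantelli yields that, with probability~$1$, only finitely many $G_r$ fail. On that probability-$1$ event, for each fixed dimension~$j$ the rounds with active dimension~$j$ are infinitely many ($r = j, j+k, j+2k, \dots$) and all but finitely many of them realize $G_r$, producing times $n_r > T_{r-1} \to \infty$ with $\tfrac1{n_r}\sum_{i<n_r}\rwd_j(s_i) \ge -\epsilon_r \to 0$; hence $\limsup_n \tfrac1n\sum_{i<n}\rwd_j(s_i) \ge 0$. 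Thus every dimension satisfies $\MeanSup_j$, so $\rho \in \MeanSup = \bigwedge_j \MeanSup_j$. Therefore $\prob{s_0}^{\straa,\strab}(\MeanSup) = 1$ for every $\strab$ and every initial state, i.e.\ $\straa$ is almost-sure winning for $\MeanSup$ everywhere.

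The main obstacle I expect is calibrating the round lengths $\ell_r$ to serve two purposes at once: they must be long enough that the bounded ``garbage'' reward $\lvert P\rvert \le WT_{r-1}$ accumulated over all previous rounds is washed out of the whole-history average (handled by forcing the relevant times past $N_r = \Theta(WT_{r-1}/\epsilon_r)$), and long enough that inside a single round a favourable time occurs with probability $\ge 1-\delta_r$ \emph{uniformly over the opponent} --- a delicate point, since an almost-sure winning strategy for $\MeanSup_{j_r}$ only guarantees a good running average at infinitely many, a priori unbounded, times rather than by any fixed deadline, which is precisely why the compactness argument is needed.
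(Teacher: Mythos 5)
Your proof is correct and follows essentially the same route as the paper's: a round-robin strategy that dedicates successive, rapidly growing blocks to the individual dimensions, chooses the block lengths so that the bounded reward accumulated before a block is washed out of the whole-history average, and concludes from a per-round quantitative bound that each dimension's running average is within $\epsilon_r \to 0$ of $0$ at infinitely many times. The only substantive differences are technical: the paper obtains the uniform per-round bound by citing \cite[Lemma~1]{CDGO14} (a fixed deadline $N_\epsilon$, uniform over states and opponent strategies), where you re-derive a windowed version via a compactness argument on the opponent's strategy space, and the paper passes to the limit with (reverse) Fatou's lemma where you use summable error probabilities and Borel--Cantelli; both substitutions are sound.
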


\begin{proof}
To show this result, consider for each objective $\MeanSup_j$ an almost-sure winning 
strategy $\straa_j$. Then for all $\epsilon > 0$, there exists a number of steps 
$N_{\epsilon}$ such that for all $N \geq N_{\epsilon}$,  for all states $s$, 
all dimensions $1 \leq j \leq k$, and all strategies $\strab$
of player~$\mi$, we have~\cite[Lemma~1]{CDGO14}:
$$ \prob{s}^{\straa_j,\strab}\left(\left\{ s_0 s_1 \dots \in S^\omega \mid \frac{1}{N} 
\cdot \sum_{i = 0}^{N-1} \rwd_j(s_{i}) \geq -\epsilon\right\}\right) \geq 1-\epsilon.$$
Thus by playing according to strategy $\straa_j$ for a large enough number of steps, 
the average reward in dimension $j$ can be made arbitrarily close to $0$,
with probability arbitrarily close to $1$.

Let $W = \max_{s \in S, 1 \leq j \leq k} \abs{\rwd_j(s)}$ be the 
largest reward in absolute value. 
We construct an almost-sure winning strategy $\straa$ for the objective $\MeanSup$ 
as follows. The strategy $\straa$ plays in rounds numbered $1,2,\dots$, where
at round $\ell$ the strategy $\straa$ plays for each dimension $j=1,\dots,k$ successively as follows:
let $Z$ be the length of the current play prefix, and let $\epsilon_Z = \frac{1}{Z}$;
play according to strategy $\straa_j$ from the current play prefix 
for $N_Z = \max\{N_{\epsilon_Z},Z^2\cdot W\}$ steps.

Now we show that the strategy $\straa$ is almost-sure winning for the conjunction
of mean-payoff-sup objectives. Consider an arbitrary dimension $j$. For all
play prefixes of length $Z$, the total reward is at least $-Z \cdot W$. 
It follows that after playing according to $\straa_j$ for $N_Z$ steps (at round $\ell$), 
since $N_Z \geq N_{\epsilon_Z}$, against all strategies $\strab$ of player~$\mi$, we have:

$$\prob{s}^{\straa,\strab}\left(\left\{ s_0 s_1 \dots \in S^\omega \mid \frac{1}{N_{Z}} 
\cdot \sum_{i = 0}^{N_{Z}-1} \rwd_j(s_{i}) \geq \frac{-Z\cdot W - \epsilon_Z \cdot N_{Z}}{Z+N_{Z}} \right\}\right) 
\geq 1-\epsilon_Z.
$$
Note that $\frac{-Z \cdot W - \epsilon_Z \cdot N_{Z}}{Z+N_{Z}} \geq 
\frac{-Z \cdot W - \epsilon_Z \cdot N_{Z}}{N_{Z}} \geq
\frac{-Z \cdot W}{Z^2\cdot W} - \epsilon_Z \geq
- \frac{2}{Z}$.
Therefore,
$$\limsup_{Z \to \infty} \prob{s}^{\straa,\strab} \left(\left\{ s_0 s_1 \dots \in S^\omega \mid \frac{1}{N_{Z}} 
\cdot \sum_{i = 0}^{N_{Z}-1} \rwd_j(s_{i}) \geq \frac{-2}{Z} \right\}\right)
\geq  \limsup_{Z \to \infty} 1-\epsilon_Z = 1.
$$
By Fatou's lemma~\cite{Billingsley}, for a sequence $\cale_Z$ of measurable sets we have that 
$\limsup_{Z \to\infty}\prob{}(\cale_Z) \leq \prob{}(\limsup_{Z\to\infty} \cale_Z)$.
Hence we have 

$$\prob{s}^{\straa,\strab} \underbrace{ \left( \limsup_{Z \to \infty} \left\{ s_0 s_1 \dots \in S^\omega \mid \frac{1}{N_{Z}} 
\cdot \sum_{i = 0}^{N_{Z}-1} \rwd_j(s_{i}) \geq \frac{-2}{Z} \right\} \right)}_{\varphi} = 1.$$

We show that $\varphi \subseteq \MeanSup_j$. Consider a play $\rho = s_0 s_1 \dots \in \varphi$.
Since for $Z \to \infty$, we have $N_Z \to \infty$ and $\frac{-2}{Z} \to 0$, 
for \emph{all} $\epsilon > 0$ there exist infinitely many integers $N$ such that 
$\frac{1}{N} \cdot \sum_{i = 0}^{N-1} \rwd_j(s_{i}) \geq -\epsilon$.
Therefore $\limsup_{N \to\infty} \frac{1}{N} \cdot \sum_{i = 0}^{N-1} \rwd_j(s_{i}) \geq 0$
and thus $\rho \in \MeanSup_j$. It follows that 
$\prob{s}^{\straa,\strab} \left( \MeanSup_j \right) = 1$ and since this holds 
for all dimensions $j$, we have $\prob{s}^{\straa,\strab} \left( \MeanSup \right) = 1$.
Hence $\straa$ is almost-sure winning for the conjunction of mean-payoff-sup 
objectives, from all states~$s$.
\end{proof}

\begin{theorem}\label{theo:mean-payoff-sup-infinite}
The value and the value-strategy problems for stochastic games 
with generalized mean-payoff-sup objectives under infinite-memory 
strategies are in NP~$\cap$~coNP.
\end{theorem}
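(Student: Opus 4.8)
The plan is to reduce to the almost-sure winning problem, characterize the almost-sure winning region of the conjunction through a fixpoint over the one-dimensional problems (each of which is already known to be in NP~$\cap$~coNP), and read off an NP and a coNP algorithm from this characterization. By Remark~\ref{rem:red-almost-sure} it suffices to place the almost-sure winning problem for $\MeanSup=\bigwedge_{j=1}^k\MeanSup_j$ in NP~$\cap$~coNP. The building block is the one-dimensional case: for a single mean-payoff-sup objective $\MeanSup_j$ the almost-sure winning problem in stochastic games is in NP~$\cap$~coNP, its almost-sure winning region admits a uniform positive certificate (a pure memoryless almost-sure winning strategy for player~$\ma$ valid from the whole winning region), and each of its non-winning states $v$ admits a negative certificate, namely a pure memoryless strategy $\strab$ of player~$\mi$ such that in the MDP obtained by fixing $\strab$ player~$\ma$ cannot ensure $\MeanSup_j$ almost surely from $v$ (a property checkable in polynomial time on that MDP); this follows from the positional determinacy of one-dimensional mean-payoff stochastic games~\cite{CDGO14,GH10}.

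Call a set $U\subseteq S$ \emph{$\ma$-closed} if at every $\mi$-state and every probabilistic state of $U$ all successors lie in $U$, and at every $\ma$-state of $U$ at least one action has all its successors in $U$; for such a $U$ let $\GG|_U$ be the induced sub-game (player~$\ma$ can confine every play to $U$). I claim the almost-sure winning region $A$ of $\MeanSup$ is the \emph{largest} $\ma$-closed set $U$ such that in $\GG|_U$ every state of $U$ is almost-sure winning for each $\MeanSup_j$. Indeed, for any such $U$, Lemma~\ref{lem:all-one} applied to the game $\GG|_U$ shows that every state of $U$ is almost-sure winning for $\bigwedge_j\MeanSup_j$ in $\GG|_U$, hence (confining to $U$) in $\GG$; so the largest such $U$ is contained in $A$. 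Conversely, if $\straa$ is almost-sure winning for $\MeanSup$ from $s$, then against every strategy of player~$\mi$ the play stays inside $A$ with probability~$1$: otherwise player~$\mi$ could follow the play until it first leaves $A$, say into a state $v\notin A$, and then switch to a strategy defeating the continuation of $\straa$ from $v$ (such a strategy exists because $v\notin A$), making the satisfaction probability strictly below~$1$, a contradiction. Consequently $A$ is $\ma$-closed, and the restriction of $\straa$ to $\GG|_A$ witnesses that every state of $A$ is almost-sure winning for each $\MeanSup_j$ in $\GG|_A$; hence $A$ is one of the candidate sets and is contained in the largest one, so the two coincide.

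The region $A$ is computed by the decreasing iteration $U_0=S$, then $V_i:=$ the largest $\ma$-closed subset of $U_i$ (obtained in polynomial time by iteratively deleting $\ma$-states with no surviving action, and then $\mi$- and probabilistic states with a successor outside), and $U_{i+1}:=\{v\in V_i\mid v$ is almost-sure winning for $\MeanSup_j$ in $\GG|_{V_i}$, for all $j\}$; the sequence stabilizes at $A$ after at most $|S|$ steps. For the NP upper bound, guess a set $B\ni s$ together with, for each $j$, a pure memoryless strategy of player~$\ma$ in $\GG|_B$; verify in polynomial time that $B$ is $\ma$-closed and that each guessed strategy is almost-sure winning for $\MeanSup_j$ from every state of $\GG|_B$ (a polynomial MDP check). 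By the characterization together with Lemma~\ref{lem:all-one}, such a witness exists iff $s$ is almost-sure winning, and no maximality of $B$ is required. For the coNP upper bound, guess the whole decreasing chain $S=U_0\supsetneq\cdots\supsetneq U_T\not\ni s$ and, for each step $i$, pure memoryless strategies of player~$\mi$ certifying (by polynomial MDP checks in $\GG|_{V_i}$) that each state deleted from $V_i$ fails to be almost-sure winning for some $\MeanSup_j$; verify everything in polynomial time. The canonical chain is such a witness whenever $s\notin A$; conversely, by induction on $i$ the inclusion $A\subseteq U_i$ is preserved along any accepted chain, because $A\subseteq V_i$ (as $A$ is $\ma$-closed and $\ma$-closed within $U_i$) and no state of $A$ can carry a valid deletion certificate (by the first inclusion of the characterization, applied inside $\GG|_{V_i}$, which contains $\GG|_A$ since $A\subseteq V_i$). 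Hence any accepted chain proves $s\notin A$.

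The main obstacle is twofold. First, the ``play stays in $A$'' step and the reductions to the one-dimensional problem must be carried out carefully, since the strategies involved are history-dependent and the one-dimensional almost-sure results for $\MeanSup_j$ (as opposed to $\MeanInf_j$) are sensitive to the $\liminf$ versus $\limsup$ distinction. Second, one must verify that the polynomially many invocations of the one-dimensional NP~$\cap$~coNP subproblem genuinely collapse to a single nondeterministic, resp.\ co-nondeterministic, guess; this is exactly why the NP side guesses the entire region plus all positive certificates at once, and the coNP side guesses the entire removal chain plus all negative certificates at once, the set-inclusion invariant $A\subseteq U_i$ being what makes the latter sound.
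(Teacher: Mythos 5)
Your proposal is correct and follows essentially the same route as the paper: reduce the value problems to the almost-sure problem via Remark~\ref{rem:red-almost-sure}, use Lemma~\ref{lem:all-one} to reduce the conjunction to the one-dimensional objectives, certify membership in NP by guessing the winning region (a player-$\ma$ subgame) together with positional witnesses, and certify non-membership by a layered elimination of the complement using memoryless spoiling strategies for player~$\mi$. The only real difference is packaging on the coNP side: the paper merges the layers (built with the positive attractor $\atmi$) into a single memoryless player-$\mi$ strategy checked by one polynomial-time multi-dimensional MDP test, whereas you keep the elimination chain explicit with per-layer one-dimensional checks guarded by the invariant $A\subseteq U_i$ --- both are sound for the same reasons.
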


\smallskip\noindent{\bf The NP algorithm.}
By Lemma~\ref{lem:all-one}, an NP algorithm for the almost-sure winning
problem is to guess the set $\Win$ of almost-sure winning states, and check
that $(i)$ $\Win$ \emph{induces a subgame} for player~$\ma$, that is for every
state $s \in \Win \cap S_{\ma}$ there exists an action $a \in A$ such that
$\Supp(\PP(s,a)) \subseteq \Win$, and for every state $s \in \Win \cap S_{\mi}$ 
for all actions $a \in A$, we have $\Supp(\PP(s,a)) \subseteq \Win$;
and $(ii)$ in the subgame induced by $\Win$ every state is almost-sure winning
for every one-dimensional mean-payoff-sup objective, which can be checked in NP~\cite{LigLip69}.
This establishes the first part of Theorem~\ref{theo:mean-payoff-sup-infinite}.

\smallskip\noindent{\bf The coNP algorithm.}
We now show that the almost-sure winning problem is also in coNP. 
Given a set $T \subseteq S$ of states, let 
 
\begin{align*}
\cpremi(T) =\, & \{s \in S_{\ma} \mid \forall a \in A: \Supp(\PP(s,a)) \cap T \neq \emptyset \}\, \cup \\
             & \{s \in S_{\mi} \mid \exists a \in A: \Supp(\PP(s,a)) \cap T \neq \emptyset \}.
\end{align*}
be the set of \emph{controllable predecessors} of set $T$ for player~$\mi$.
The \emph{positive attractor} $\atmi(T)$ of $T$ for player~$\mi$ is the least 
fixed point of the operator $f: 2^S \to 2^S: X \mapsto T \cup \cpremi(X)$,
that is the
set of all states from which player~$\mi$ has a memoryless strategy to ensure that $T$ is reached
with positive probability, against all strategies of player~$\ma$.
Note that the set $S \setminus \atmi(\{s\})$ induces a subgame for player~$\ma$.

Let $\overline{\Win} = S \setminus \Win$ be the set of all states 
that are not almost-sure winning for player~$\ma$ (for the conjunction
$\MeanSup$ of mean-payoff-sup objectives). 
By Lemma~\ref{lem:all-one} it follows that $(i)$ there exist a dimension $j$ and 
a state $s \in \overline{\Win}$ that is not almost-sure winning for player~$\ma$ 
(for the one-dimensional objective $\MeanSup_j$), and $(ii)$ in the subgame induced by the set 
$S \setminus \atmi(\{s\})$, the set of states that are not almost-sure winning 
for player~$\ma$ (for $\MeanSup$) is $\overline{\Win} \setminus \atmi(\{s\})$
(because if player~$\ma$ has an almost-sure winning strategy in the subgame, 
then this strategy is also almost-sure winning in the original game).

It follows that the set $\overline{\Win}$ can be partitioned into sets $U_1, \dots, U_n$
such that for all $1 \leq i \leq n$ there exists $\emptyset \neq R_i \subseteq U_i$ such that 
$U_i = \atmi(R_i)$ and for some dimension $j$ 
the states of $R_i$ are not almost-sure winning for player~$\ma$ for the one-dimensional objective $\MeanSup_j$,
in the subgame induced by $S \setminus (U_1 \cup \dots \cup U_{i-1})$.
In each set $U_i$ we can fix a memoryless strategy $\strab$ for player~$\mi$ as follows:
in $R_i$ where the game objective is $\MeanSup_j$, fix a memoryless optimal 
strategy for player~$\mi$ to violate the objective $\MeanSup_j$ with positive probability 
(which exists for one-dimensional mean-payoff games~\cite{LigLip69}), 
and in $U_i \setminus R_i$ fix a memoryless strategy to ensure $R_i$ is reached with 
positive probability (which exists by the definition of positive attractor).
The strategy $\strab$ ensures that against any strategy of player~$\ma$, the
objective $\MeanSup$ is violated with positive probability. 

Hence a coNP algorithm for the almost-sure winning problem is to guess
the set $\overline{\Win}$ of states that are not almost-sure winning for player~$\ma$,
and a memoryless strategy $\strab$ for player~$\mi$. The verification can be done 
by checking that the MDP obtained by playing $\strab$ in the game
is not almost-sure winning for the $\MeanSup$ objective, which can be done 
in polynomial time~\cite[Section~3.2]{BBCFK14}.
This establishes the second part of Theorem~\ref{theo:mean-payoff-sup-infinite}.

Note that improving the  NP~$\cap$~coNP bound to PTIME for even single dimensional objectives
would be a major breakthrough, as it would imply a polynomial solution for
simple stochastic games~\cite{Condon92}. 

\section{Conclusion}\label{sec:con}
In this work we consider 2\half-player games with generalized mean-payoff objectives.
We establish an optimal complexity result of coNP-completeness under finite-memory 
strategies, 
which significantly improves the previously known semi-decision procedure, 
even for the special case of the almost-sure problem.
We also establish optimal bounds for the memory required for finite-memory strategies.
Given several quantitative objectives, a more general problem is to consider
a different probability threshold for each objective (in contrast we consider
the probability of the conjunction of the objectives).
For the almost-sure problem the more general problem coincides with the problem
we consider. 
The more general problem is open, even for the special case of  
multiple reachability objectives in 2\half-player games.

\bibliographystyle{abbrv}
\bibliography{biblio}





\end{document}